\definecolor{mybrown}{RGB}{102,101,71}
\definecolor{myyellow}{RGB}{255,226,138}
\definecolor{mygreen}{RGB}{111,203,159}
\newcommand{\R}{\mathtt{R}}
\renewcommand{\P}{\mathtt{P}}
\renewcommand{\S}{\mathtt{S}}
\newcommand{\Sb}{\textbf{S}}
\newcommand{\Eb}{\textbf{E}}
\newcommand{\Mb}{\textbf{M}}
\newcommand{\supp}[1]{\text{supp}\,#1}
\newcommand{\half}{\scalebox{0.9}{$\frac{1}{2}$}}
\renewcommand{\vec}{\boldsymbol}
\newcounter{theo}[section]\setcounter{theo}{0}
\renewcommand{\thetheo}{\arabic{section}.\arabic{theo}}
\newenvironment{theo}[2][]{%
\refstepcounter{theo}%
\ifstrempty{#1}%
{\mdfsetup{%
frametitle={%
\tikz[baseline=(current bounding box.east),outer sep=0pt]
\node[anchor=east,rectangle,fill=mygreen!40]
{\strut Theorem~\thetheo};}}
}%
{\mdfsetup{%
frametitle={%
\tikz[baseline=(current bounding box.east),outer sep=0pt]
\node[anchor=east,rectangle,fill=mygreen!40]
{\strut Theorem~\thetheo:~#1};}}%
}%
\mdfsetup{innertopmargin=10pt,linecolor=mygreen!40,%
linewidth=2pt,topline=true,%
frametitleaboveskip=\dimexpr-\ht\strutbox\relax
}
\begin{mdframed}[]\relax%
\label{#2}}{\end{mdframed}}
\newcounter{lem}[section]\setcounter{lem}{0}
\renewcommand{\thelem}{\arabic{section}.\arabic{lem}}
\newenvironment{lem}[2][]{%
\refstepcounter{lem}%
\ifstrempty{#1}%
{\mdfsetup{%
frametitle={%
\tikz[baseline=(current bounding box.east),outer sep=0pt]
\node[anchor=east,rectangle,fill=mygreen!40]
{\strut Lemma~\thelem};}}
}%
{\mdfsetup{%
frametitle={%
\tikz[baseline=(current bounding box.east),outer sep=0pt]
\node[anchor=east,rectangle,fill=mygreen!40]
{\strut Lemma~\thetheo:~#1};}}%
}%
\mdfsetup{innertopmargin=10pt,linecolor=mygreen!40,%
linewidth=2pt,topline=true,%
frametitleaboveskip=\dimexpr-\ht\strutbox\relax
}
\begin{mdframed}[]\relax%
\label{#2}}{\end{mdframed}}
\newcounter{defo}[section]\setcounter{defo}{0}
\renewcommand{\thedefo}{\arabic{section}.\arabic{defo}}
\newenvironment{defo}[2][]{%
\refstepcounter{defo}%
\ifstrempty{#1}%
{\mdfsetup{%
frametitle={%
\tikz[baseline=(current bounding box.east),outer sep=0pt]
\node[anchor=east,rectangle,fill=mybrown!20]
{\strut Definition~\thedefo};}}
}%
{\mdfsetup{%
frametitle={%
\tikz[baseline=(current bounding box.east),outer sep=0pt]
\node[anchor=east,rectangle,fill=mybrown!20]
{\strut Definition~\thetheo:~#1};}}%
}%
\mdfsetup{innertopmargin=10pt,linecolor=mybrown!20,%
linewidth=2pt,topline=true,%
frametitleaboveskip=\dimexpr-\ht\strutbox\relax
}
\begin{mdframed}[]\relax%
\label{#2}}{\end{mdframed}}
\newcounter{probo}[section]\setcounter{probo}{0}
\renewcommand{\theprobo}{\arabic{section}.\arabic{probo}}
\newenvironment{probo}[2][]{%
\refstepcounter{probo}%
\ifstrempty{#1}%
{\mdfsetup{%
frametitle={%
\tikz[baseline=(current bounding box.east),outer sep=0pt]
\node[anchor=east,rectangle,fill=myyellow!50]
{\strut Problem~\theprobo};}}
}%
{\mdfsetup{%
frametitle={%
\tikz[baseline=(current bounding box.east),outer sep=0pt]
\node[anchor=east,rectangle,fill=myyellow!50]
{\strut Problem~\thetheo:~#1};}}%
}%
\mdfsetup{innertopmargin=10pt,linecolor=myyellow!50,%
linewidth=2pt,topline=true,%
frametitleaboveskip=\dimexpr-\ht\strutbox\relax
}
\begin{mdframed}[]\relax%
\label{#2}}{\end{mdframed}}
\title{\textbf{Contextuality and inductive bias in quantum machine learning}}
\author[1]{Joseph Bowles\thanks{joseph@xanadu.ai}}
\author[2]{Victoria J Wright}
\author[2]{M\'{a}t\'{e} Farkas}
\author[1]{\\ Nathan Killoran}
\author[1]{Maria Schuld}
\affil[1]{\small{Xanadu, Toronto, ON, M5G 2C8, Canada}}
\affil[2]{\small{ICFO-Institut de Ciencies Fotoniques, The Barcelona Institute of Science and Technology, 08860 Castelldefels, Spain}}
\date{}
\begin{document}
\maketitle

\begin{abstract}
    Generalisation in machine learning often relies on the ability to encode structures present in data into an inductive bias of the model class. To understand the power of quantum machine learning, it is therefore crucial to identify the types of data structures that lend themselves naturally to quantum models. In this work we look to quantum contextuality---a form of nonclassicality with links to computational advantage---for answers to this question. We introduce a framework for studying contextuality in machine learning, which leads us to a definition of what it means for a learning model to be contextual. From this, we connect a central concept of  contextuality, called operational equivalence, to the ability of a model to encode a linearly conserved quantity in its label space. A consequence of this connection is that contextuality is tied to expressivity: contextual model classes that encode the inductive bias are generally more expressive than their noncontextual counterparts. To demonstrate this, we construct an explicit toy learning problem---based on learning the payoff behaviour of a zero-sum game---for which this is the case. By leveraging tools from geometric quantum machine learning, we then describe how to construct quantum learning models with the associated inductive bias, and show through our toy problem that they outperform their corresponding classical surrogate models. This suggests that understanding learning problems of this form may lead to useful insights about the power of quantum machine learning.
\end{abstract}

\section{Introduction}
In order for a learning model to generalise well from training data, it is often crucial to encode some knowledge about the structure of the data into the model itself \cite{ibml1,ibml2,ibml3}. Convolutional neural networks \cite{cnn0,cnn1,cnn2} are a classic illustration of this principle, whose success at image related tasks is often credited to the existence of model structures that relate to label invariance of the data under translation symmetries \cite{bronstein2017geometric}. Together with the choice of loss function and hyperparameters, these structures form part of the basic assumptions that a learning model makes about the data, which is commonly referred to as the \emph{inductive bias} of the model. 

One of the central challenges facing quantum machine learning is to identify data structures that can be encoded usefully into quantum learning models; in other words, what are the forms of inductive bias that naturally lend themselves to quantum computation \cite{ibq1,ibq2,ibq3,surrogates}? In answering this question, we should be wary of hoping for a one-size-fits-all approach in which quantum models outperform neural network models at generic learning tasks. Rather, effort should be placed in understanding how the Hilbert space structure and probabilistic nature of the theory suggest particular biases for which quantum machine learning may excel. Indeed, an analogous perspective is commonplace in quantum computation, where computational advantages are expected only for specific problems that happen to benefit from the peculiarities of quantum logic. 

In the absence of large quantum computers and in the infancy of quantum machine learning theory, how should we look for insight on this issue? One possibility is to turn to complexity theory \cite{qcomplex1,qcomplex2}, where asymptotic advantages of quantum learning algorithms have been proven \cite{speedup,speedup2,speedup3,speedup4}. These results are few and far between however, and the enormous gap between what is possible to prove in a complexity-theoretic sense, and the types of advantages that may be possible in practice, means that there are growing doubts about the practical relevance of these results. Indeed, progress in machine learning is often the result of good ideas built on intuition, rather than worst-case complexity theoretic analysis. To repeat a common quip: many problems in machine learning are NP-hard, but neural networks don’t know that so they solve them anyway. 

We will take a different route, and lean on the field of quantum foundations to guide us. Quantum foundations is predominantly concerned with understanding the frontier between the quantum and classical world, and typically values a clear qualitative understanding of a phenomenon over purely mathematical knowledge. For these reasons it is well suited to identify features of quantum theory that may advance quantum machine learning in useful directions. In particular, we focus on the phenomenon of contextuality \cite{con1,con2,gencon}, which is perhaps the most prominent form of nonclassicality studied in the literature. Contextuality has a considerable tradition of being studied in relation to quantum computation \cite{concom1,concom2,concom3,concom4,concom5,concom6,needcon}, where it is closely connected to the possibility of computational speed-up. Despite this, it has had relatively little attention in quantum machine learning, with only a couple of works \cite{conml1,conml2} linking contextuality to implications for learning. 

\begin{figure}
    \centering
    \includegraphics[width=\textwidth]{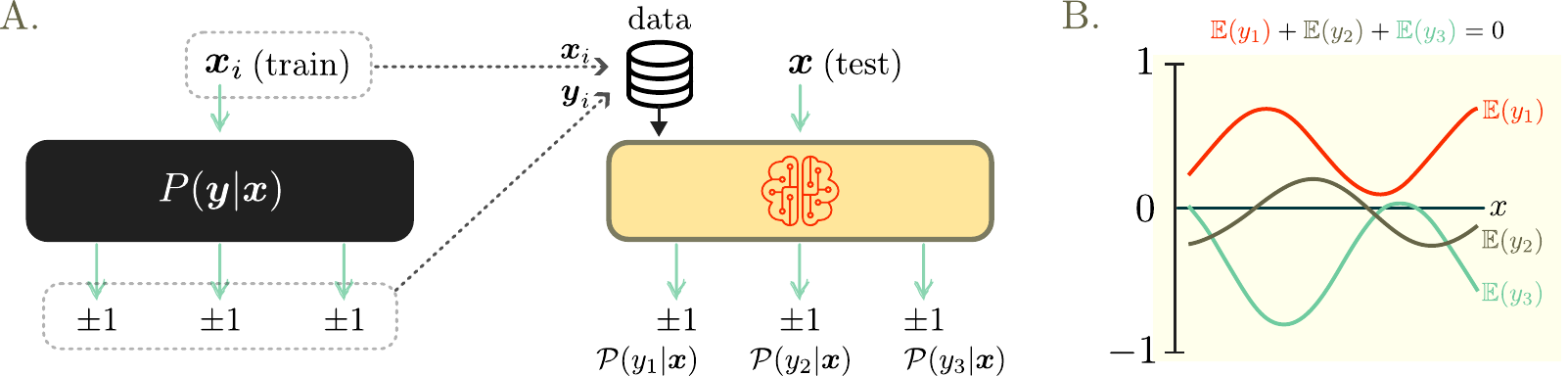}
    \caption{A. An example of the type of learning problem we consider in this work. Labels are generated for input training data $\vec{x}_i$ via a conditional process $P(\vec{y}_i\vert\vec{x}_i)$. Here, the labels take the form $\vec{y}_i=(y_i^{(1)},y_i^{(2)},y_i^{(3)})=(\pm1,\pm1,\pm1)$. The learning problem is to infer three probabilistic models $\mathcal{P}_1(y^{(1)}\vert\vec{x})$, $\mathcal{P}_2(y^{(2)}\vert\vec{x})$, $\mathcal{P}_3(y^{(3)}\vert\vec{x})$ that sample the individual labels for unseen input data. B. The data is assumed to satisfy a particular bias, which can be seen as a linear conservation law on the label space. Here, the sum of the expectation values of the labels is equal to zero for all $x$. We show that if a model encodes this as an inductive bias and is noncontextual, this implies constraints on the distributions $\mathcal{P}_k$, amounting to a limit on expressivity of model classes that are restricted to only noncontextual learning models.}
    \label{fig:mainfig}
\end{figure}

We adopt a notion of contextuality called \emph{generalised contextuality} \cite{gencon}, introduced by Spekkens in 2004. 
Loosely speaking, it refers to the fact that (i) there are different experimental procedures (called contexts) in the theory that are indistinguishable\footnote{For example, one can prepare a maximally mixed stated of a qubit by mixing states in the $z$ or $x$ basis; both procedures result in the same density matrix and are therefore indistinguishable.}, and (ii) any statistical model that reproduces the predictions of the theory must take these contexts into account. With this choice, our first task will then be to introduce a framework to talk about generalised contextuality in machine learning (Section \ref{sec:contextuality}). This was missing in previous works \cite{conml1,conml2}, which prove consequences for learning based on phenomena originating from contextuality, but do not attempt to define a notion of contextuality for machine learning that captures a wide range of models. Our general philosophy will be that the framework should depend purely on what a learning model can do, and not on the details of how it does it; i.e., the framework should be independent of the theory on which the models are built. This is necessary to have a framework that treats quantum and classical algorithms on the same footing, and ultimately involves adopting definitions in a similar spirit to the notion of operational contextuality as recently described in \cite{mischa}. 

We mostly focus on a paradigm of machine learning called multi-task learning \cite{mt0,mt1}, in which the aim is to simultaneously learn a number of separate models for a collection of different (but typically correlated) tasks. Multi-task learning scenarios are conceptually similar to commonly studied contextuality scenarios, and this similarity leads us to a definition of what it means for a multi-task model to be contextual (Section \ref{sec:contextml}). Although the focus on multi-class learning problems appears restrictive, as the separation between tasks is arbitrary at the mathematical level, we also arrive at a notion of contextuality in the single task setting (Section \ref{sec:relative}). In particular, we argue that it makes sense to think of contextuality as a property relative to a particular inductive bias of a model, rather than a property of the model as a whole.

Once we have described our framework, our second task will be to identify specific learning problems for which contextuality plays a role (Section \ref{sec:limits}). We show that this is the case when learning probabilistic models from data sets which feature a linearly conserved quantity in a discrete label space (see Figure \ref{fig:mainfig}). Such data sets can arise naturally from experiments involving conserved quantities, zero-sum game scenarios \cite{zs1,zs2}, logistics with conserved resources, substance diffusion \cite{substance1,substance2,substance3} in biological systems, and human mobility \cite{alessandretti2018evidence} and migration \cite{substancemigration}. We show that the ability of a model to encode the conserved quantity as an inductive bias directly links to a central concept in generalised contextuality, called \emph{operational equivalence}. This results in a constraint on noncontextual learning models that encode the desired bias, which amounts to a limit on the expressivity of noncontextual model classes. For certain data sets, this limitation can negatively impact generalisation performance due to the lack of a suitable model within the class that matches the underlying data distribution; in such cases contextuality may therefore be required for learning. To illustrate this point, in Section \ref{sec:rps} we construct a toy problem based on the rock, paper, scissors zero-sum game and prove precise limits on the expressivity of noncontextual model classes that attempt to learn the payoff behaviour of the game. 

In the final part of the work, we study the performance of quantum models for problems that involve our contextuality-inspired bias (Section \ref{sec:encodingbias}). We first describe two approaches to construct quantum ans\"{a}tze encoding the bias. The first of these encodes the bias into the state structure of the ansatz, and exploits tools from geometric quantum machine learning \cite{geoml1,geoml2,geoml3}. The second approach encodes the bias into the measurement structure, and we present a new family of measurements to this end that may be of independent interest. We then use these tools in a simple numerical investigation (Section \ref{sec:surrogate}), inspired by a recent work of Schreiber et al.\ \cite{surrogates}. Using the fact that quantum machine learning models are equivalent to truncated Fourier series \cite{gil2020input,schuld2021effect}, the authors of \cite{surrogates} define the notion of a classical surrogate model: a linear Fourier features model that has access to the same frequencies of the quantum model, but which lacks its specific inductive bias. The authors found that classical surrogate model classes perform better than quantum model classes on a wide range of regression tasks, the message being that it is still unclear what the inductive bias of quantum machine learning is useful for. In our numerical study, we show that a quantum model class that encodes our contextuality-inspired bias achieves a lower generalisation error than the corresponding surrogate model classes at a specific learning task, even after allowing for regularisation in the surrogate model. We argue that this is due to the fact that the bias cannot be easily encoded into the surrogate model class, which therefore cannot exploit this information during learning. 

In Section \ref{sec:applications} we elaborate on a number of areas where  contextuality-inspired inductive bias can be expected to play a role in learning. Many of these areas are classical in nature, and therefore suggests that quantum machine learning may be suited to tackling classical learning problems with a specific structure. Finally, in Section \ref{sec:outlook}, we outline our vision for this line of research and the possible next steps to take. Overall, we hope our approach and framework will lead to a new way of thinking about quantum machine learning, and ultimately lead to the identification of impactful problems where the specific structure of quantum theory makes quantum models the machine learning models of choice.

\tableofcontents

\section{Generalised contextuality}\label{sec:contextuality}
In this section we lay the theoretical groundwork that will lead us to a definition of contextuality for learning models. Some readers may wish to read section \ref{sec:main} first, which presents an overview of our main theoretical results, and can be understood without precise knowledge of the definitions presented here or in Section \ref{sec:contextml}. 

To define contextuality in machine learning we must first decide on a framework of contextuality, since there exist inequivalent frameworks in the literature. We have chosen to adopt that of \emph{generalised contextuality} \cite{gencon}, which from hereon we will often often refer to as simply `contextuality'. This is a modern version of contextuality that is commonly adopted in current foundations research \cite{conref0,conref1,conref2,conref3}. Like the more orthodox notion of Kochen-Specker contextuality \cite{kscon}, generalised contextuality has also been connected to speedup in quantum computation \cite{schmid2022uniqueness,needcon}, where it has been shown to be necessary for computational advantage. It is also based on a single, simple principle (described in section \ref{sec:gencon}) that subsumes older notions of contextuality such as Kochen-Specker contextuality \cite{kscon}, as well as Bell nonlocality \cite{wright2022invertible}. For this reason it is particularly attractive from a theoretical perspective.

Following our philosophy that contextuality for machine learning should reflect what the learning model does, rather than how it does it\footnote{For example, our notion of contextuality should not differentiate between a quantum model and a classical simulation of the same model, since the two are computationally equivalent from the perspective of the user. This perspective is normal in computation; for example, the concept of run-time is ignorant of the underlying theory.}, this will lead us to a definition that is inline with the concept of operational contextuality that was recently defined by Gitton and Woods \cite{mischa}. For this reason we will follow quite closely the notation and language of \cite{mischa}, and we encourage motivated readers to read the associated sections of that article, which give a more detailed introduction to generalised contextuality than we attempt here. In order to tailor the discussion to the machine learning audience, we introduce contextuality based on a user interacting with a machine learning model, however we remark that the framework of contextuality extends beyond this to general experimental scenarios.

\subsection{Procedures, preparations and effects}
The simplest and most commonly studied scenario of contextuality is the prepare-and-measure scenario (see Figure \ref{fig:models}A). For our purposes, we imagine a user with access to a machine learning model, who can prepare the model in a specific state by inputting data into it. 
In contextuality, this is called a \emph{preparation}, denoted $\Sb\in\texttt{Preps}$ (the letter $\Sb$ is used since a preparation can be thought of as a state), and is described via a \emph{procedure}, denoted $L$. A procedure is simply a list of actions describing in words what the user does, e.g., `input the data point $\vec{x}$ into the model'. We also assume that the user can query the model for different tasks and observe the corresponding predictions. These events are described by \emph{effects} in contextuality. An effect, denoted $\Eb = \omega\vert L\in \texttt{Effects}$ is a procedure $L$ together with a specific observation $\omega\in W_L$ that signals success, where $W_L$ is the set of all observations that may follow $L$. For example `query the model for task 1' (procedure) and `observe the label +1' (observation). An effect can therefore be thought of as a binary yes/no question related to a specific procedure\footnote{Whereas preparations can be thought of as the analogue of quantum states, effects can be thought of as the analogue of measurement operators.}.  

Note that at this level of description we have not specified anything about how the model works (it could be quantum or classical for example), and are concerned only with descriptions of the actions and observations of the user that interacts with it. For this reason, this approach is often called an \emph{operational} approach; hence the common appearance of the term `operational' in the proceeding sections.  

\begin{figure}
    \centering
    \includegraphics[width=\textwidth]{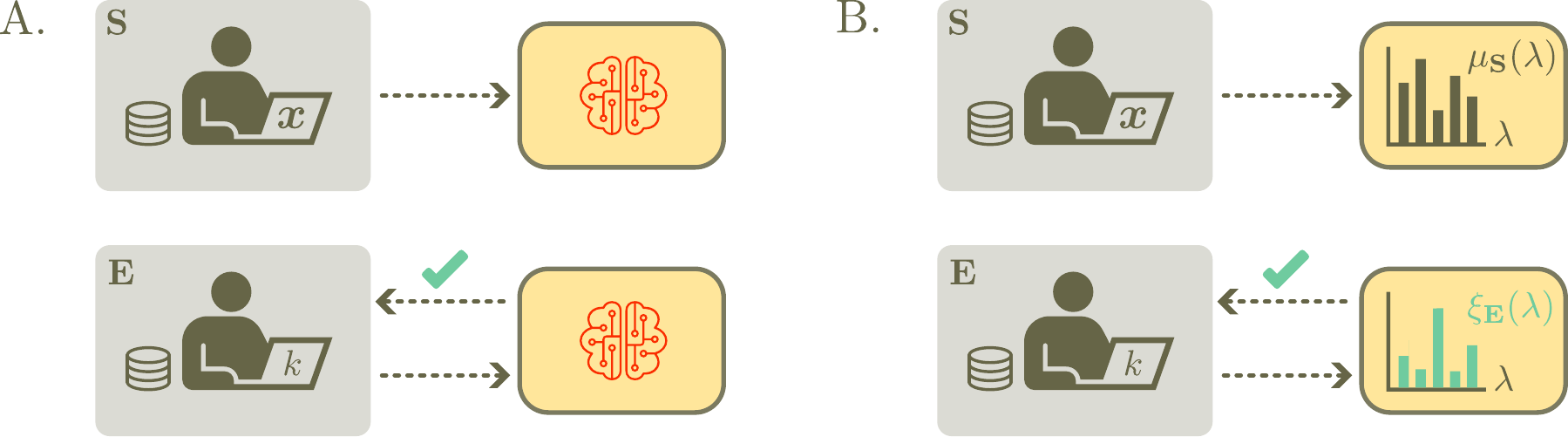}
    \caption{A. (top) A prepare-and-measure scenario. A preparation $\Sb$ is a procedure (a list of actions) that a user carries out. In this example, the procedure is to input the data $\vec{x}$ into the machine learning model. (bottom) An effect $\Eb$ is another procedure that additionally has an observation that signals success of the procedure. Here, the procedure is to query the model with some task (labelled $k$) and success corresponds to a particular label being returned. B. An ontological model is a particular causal model to explain the statistics of a prepare-and-measure scenario. The preparation results in some probabilistic mixture $\mu_{\Sb}(\lambda)$ over a set of ontic states $\lambda$. The effect is a function $\xi_{\Eb}(\lambda)$ that returns the probability of the successful outcome given the ontic state $\lambda$. For us, $\lambda$ can be thought of a specific memory state of the learning model, and $\xi_{\Eb}(\lambda)$ a function that gives the probability of the learning model returning a specific label for some task.}
    \label{fig:models}
\end{figure}

\subsection{Operational statistics}
For every $\Sb\in\texttt{Preps}$ and every $\Eb\in\texttt{Effects}$, we assume there is a corresponding probability $P(\Eb\vert \Sb)$ that describes the likelihood of the successful outcome of $\Eb$ given $\Sb$, and we call the set of all $P(\Eb\vert \Sb)$ the \emph{operational statistics}. In our case, $P(\Eb\vert \Sb)$ is simply the probability to observe a particular label for some task (given by $\Eb$) conditioned on certain data being input into the model (given by $\Sb)$. In addition to the effects that correspond to the possible predictions of the model, we also include the trivial effect $\Omega$ that always occurs for any preparation,
\begin{align}\label{trivial}
  P(\Omega \vert \Sb)=1 \quad \forall \, \Sb \in\texttt{Preps},
\end{align}
and the null effect $\emptyset$ that never occurs,
\begin{align}\label{null}
  P(\emptyset \vert \Sb)=0 \quad \forall \, \Sb \in\texttt{Preps}.
\end{align}
Operationally speaking, these effects correspond to the user simply ignoring the model and declaring either a successful or failed observation. We naturally assume that probabilities of effects corresponding to two mutually incompatible observations are additive. For example, if $\Eb_{0 \lor 1}=0\lor 1\vert L$ is an effect corresponding to observing a label prediction $0$ or $1$ and $\Eb_{0}=0\vert L$ and $\Eb_{1}=1\vert L$ are those for labels $0$ and $1$ independently, then 
\begin{align}
    P(\Eb_{0 \lor 1}\vert \Sb) = P(\Eb_{0}\vert \Sb)+P(\Eb_{1}\vert \Sb).
\end{align}
With this, we can now define a measurement 
\begin{align}
    \Mb = \{\Eb_1, \Eb_2, \cdots ,\Eb_d\}
\end{align}
to be a collection of effects corresponding to the same procedure $L_{\Mb}$ with mutually incompatible observations $\omega_i$ whose probabilities sum to 1:
\begin{align}
    \Eb_i = \omega_i\vert L_{\Mb}, \quad \sum_i P(\Eb_i\vert \Sb)=1\quad \forall\, \Sb \in\texttt{Preps}.
\end{align}
we call the triple
    $\{\texttt{Preps},\texttt{Effects},P(\Eb\vert\Sb)\}$
of possible preparations, effects and operational statistics, an \emph{operational scenario}\footnote{This is analogous to the concept of `operational theory' encountered in the generalized contextuality literature; here we use the word `scenario' instead since we are not concerned with an entire physical theory.}.

\subsection{Convex mixtures}
A key ingredient of generalised contextuality is the ability to perform convex mixtures; i.e., the user is assumed to have access to a trusted source of randomness that they can use to probabilistically mix procedures. For example, given two preparations $\Sb_1$ and $\Sb_2$ with procedures $L_1$ and $L_2$, the user may flip a biased coin (with bias $p$) and perform either $L_1$ or $L_2$. We call the resulting preparation a \emph{preparation density}, written
\begin{align}
    s = p \Sb_1 + (1-p) \Sb_2,
\end{align}
which results in a corresponding mixture of operational statistics, 
\begin{align}\label{add1}
P(\Eb\vert p \Sb_1 + (1-p) \Sb_2) = p P(\Eb\vert \Sb_1) + (1-p)P(\Eb\vert \Sb_2) \quad \forall \Eb\in\texttt{Effects}.
\end{align}
Similarly given two effects $\Eb_{1}=\omega_1\vert L_1$ and $\Eb_{2}=\omega_2 \vert L_2$, the user can flip a biased coin and realise the \emph{effect density}
\begin{align}
    e = p \Eb_1 + (1-p) \Eb_2,
\end{align}
such that
\begin{align}\label{add2}
P(p \Eb_1 + (1-p) \Eb_2\vert \Sb) = p P(\Eb_1\vert \Sb) + (1-p)P(\Eb_2\vert \Sb) \quad \forall \Sb \in\texttt{Preps}.
\end{align}
By allowing arbitrary probabilistic mixtures of preparations and effects, we can therefore prepare any preparation density $s\in\text{conv}(\texttt{Preps})$ and any effect density $e\in\text{conv}(\texttt{Effects})$, where $\text{conv}$ denotes the convex hull of a set.

\subsection{Operational equivalences}\label{sec:opeq}
At the heart of generalised contextuality is the notion of operational equivalence, which relates to the possibility of distinguishing between pairs of preparation or effect densities. We say that two preparation densities $s_1,s_2$ are \emph{operationally equivalent}, denoted $\sim$, iff they give identical predictions for all effects,
\begin{align}
    s_1 \sim s_2 \iff P(\Eb \vert s_1) =  P(\Eb \vert s_2) \quad \forall\,\Eb\in\texttt{Effects}. 
\end{align}
Similarly, two effect densities $e_1,e_2$ are operationally equivalent iff they have the same probability of success for all preparations,
\begin{align}
    e_1\sim e_2 \iff P(e_1 \vert \Sb) = P(e_2 \vert \Sb) \quad \forall \, \Sb\in\texttt{Preps}.
\end{align}
Thus, a pair of preparations or effects are operationally equivalent if and only if they are indistinguishable with respect to any of the possible procedures and observations in the operational scenario. 

%

\subsection{Ontological models}\label{sec:ontmodel}
An ontological model is a specific causal model (see Figure \ref{fig:models}B) to describe the operational statistics of an operational scenario. In an ontological model, each preparation $\Sb$ is assigned a probability density function $\mu_{\Sb}(\lambda)$ called an \emph{ontic state distribution} over a set of so-called \emph{ontic states}\footnote{One could also call these hidden variables.} $\lambda \in \Lambda$. The idea here is that the set $\Lambda$ of ontic states represents the set of all possible accessible physical states in some underlying theory. Returning to our machine learning motivation, we can think of $\lambda$ as being a specific memory state of the learning model and $\Lambda$ the set of all such memory states. For example, if the model is a neural network then $\lambda$ denotes the physical memory state of the computer on which it is stored. $\mu_{\Sb}(\lambda)$ is therefore the probability distribution over memory states induced by some data preparation $\Sb$. 

The probability of an effect given some preparation is causally mediated through the variable $\lambda$. 
Each effect $\Eb$ is assigned a function $\xi_{\Eb}(\lambda):\Lambda\rightarrow [0,1]$ called an \emph{ontic response function} that returns the probability of $\Eb$ given $\lambda$. In a machine learning setting $\xi_{\Eb}(\lambda)$ typically corresponds to the probability of returning a particular label for some task given the memory state $\lambda$. From the law of total probability we therefore have
\begin{align}\label{ontmodel}
    P(\Eb\vert \Sb) = \int_\Lambda\text{d}\lambda\, \mu_{\Sb}(\lambda)\xi_{\Eb}(\lambda).
\end{align}
Ontic state distributions and ontic response functions for preparation and effect densities are defined using the same additivity relations as \eqref{add1}, \eqref{add2},
\begin{align}
    \mu_{p\Sb_1+(1-p)\Sb_2}(\lambda) = p\mu_{\Sb_1}(\lambda) + (1-p)\mu_{\Sb_2}(\lambda), \quad
    \xi_{p\Eb_1+(1-p)\Eb_2}(\lambda) = p\xi_{\Eb_1}(\lambda) + (1-p)\xi_{\Eb_2}(\lambda).
\end{align}
An \emph{ontological model} of an operational scenario is defined by a triple $\{\Lambda, \{\mu_{\Sb}(\lambda)\}, \{\xi_{\Eb}(\lambda)\}\}$ of ontic states, ontic state distributions and ontic response functions that reproduce the operational statistics via \eqref{ontmodel}. 

\subsection{Generalised noncontextuality of ontological models}\label{sec:gencon}
Generalised noncontextuality is a constraint, called \emph{ontological equivalence}, that is imposed on ontological models. Within a given ontological model of some operational scenario, we say that two preparation densities $s_1$ and $s_2$ are ontologically equivalent if
\begin{align}
\mu_{s_1}(\lambda)  = \mu_{s_2}(\lambda) \quad \forall \,\lambda,
\end{align}
i.e., they result in the same probability density over ontic states. Similarly, two effect densities $e_1$ and $e_2$ are ontologically equivalent if they have the same ontic response function:
\begin{align}
    \xi_{e_1}(\lambda)  = \xi_{e_2}(\lambda) \quad \forall \,\lambda.
\end{align}
Generalised noncontextuality can now be understood as the following principle, which is often motivated by Leibniz's principle of the identity of indiscernibles \cite{leibniz}:
\begin{center}
    operational equivalence $\iff$ ontological
    equivalence,
\end{center}
or equivalently
\begin{align}
    s_{1}\sim s_{2} &\iff   \mu_{s_1}(\lambda)  = \mu_{s_2}(\lambda) \quad \forall \,\lambda, \label{onteq1} \\
    e_{1}\sim e_{2} &\iff \xi_{e_1}(\lambda)  = \xi_{e_2}(\lambda) \quad \forall \,\lambda. \label{onteq2} 
    \end{align}
In other words, \emph{if two preparation of effect densities are indistinguishable, then they should be described in the same way in the ontological model}. At this point it can be useful to make a link to Kochen-Specker noncontextuality for those that are familiar. In Kochen-Specker noncontextuality, one assigns the same deterministic response function to projectors that appear in different measurement contexts. This follows from an application of \eqref{onteq1} and \eqref{onteq2}, where the projector that appears in two different measurements is understood as two distinct but operationally equivalent effects, and the determinism of the response functions is a consequence of preparation noncontextuality \cite{kunjwal2015kochen}. As we will see however, there are other operational equivalences that are not of this form. As a result, the framework is applicable to scenarios beyond the Kochen-Specker setting; hence the name \emph{generalised} contextuality.

A noncontextual ontological model of an operational scenario can now be defined as follows.
\begin{defo}[Noncontextual ontological model]{def:ncommain}
A noncontextual ontological model of an operational scenario $ \{\texttt{Preps},\texttt{Effects},P(\Eb\vert\Sb)\}$ is a triple \begin{align}
\{\Lambda, \mu_{\Sb}(\lambda), \xi_{\Eb}(\lambda)\}
\end{align}
of ontic states, ontic state distributions and ontic response functions satisfying the constraints \eqref{onteq1} and \eqref{onteq2} such that 
\begin{align}
    P(\Eb\vert \Sb) = \int_\Lambda\text{d}\lambda\, \mu_{\Sb}(\lambda)\xi_{\Eb}(\lambda) \quad \forall \Sb\in\texttt{Preps},\; \forall\Eb\in\texttt{Effects}.
\end{align}
\end{defo}
In prepare-and-measure scenarios where operational equivalences exist, the constraints \eqref{onteq1} and \eqref{onteq2} limit the set of operational statistics that admit a noncontextual ontological model to a subset of all possible statistics. If the operational statistics lie outside of this set, they are therefore called \emph{contextual}. 

\section{Contextuality of multi-task machine learning models}\label{sec:contextml}
We now have all the necessary ingredients we need to define a notion of contextuality for machine learning. As in the previous section, we consider a scenario that involves a user with access to a learning model, who can input data $\vec{x}$ into the model and observe predictions $y$ for $m$ different tasks. A learning model in this scenario is therefore a device that samples from one of the $m$ conditional probability distributions $\mathcal{P}_{\theta}^1(y\vert\vec{x}),\mathcal{P}_{\theta}^2(y\vert\vec{x}),\cdots, \mathcal{P}_{\theta}^m(y\vert\vec{x})$, and we denote the model by
\begin{align}\label{mtmodel}
h(\theta) =  \{\mathcal{P}_{\theta}^1(y\vert\vec{x}),\mathcal{P}_{\theta}^2(y\vert\vec{x}),\cdots, \mathcal{P}_{\theta}^m(y\vert\vec{x})\}.
\end{align}
Here, $\theta$ denotes some specific fixed parameters of the associated model class $H$,
\begin{align}\label{mtclass}
    H=\{h(\theta)\}_\theta = \{\{\mathcal{P}_\theta^1(y\vert\vec{x}),\mathcal{P}_\theta^2(y\vert\vec{x}),\cdots,\mathcal{P}_\theta^m(y\vert\vec{x})\}\}_\theta,
\end{align} 
which are chosen according to some data-dependent figure-of-merit, such as empirical risk minimisation. 

Since the model $h(\theta)$ is capable of inference for a number of different tasks, we call it a \emph{multi-task model}, and $H$ a \emph{multi-task model class}. Multi-task models are an active area of machine learning research \cite{mt0,mt1}, and in well-aligned tasks, can lead to improved performance and lower data size requirements than an approach in which models are learned independently. In this section we focus on multi-task models to define our notion of contextuality. The reason for this is that this paradigm most naturally fits the existing contextuality literature, where the relevant scenarios involve a choice of measurements that we map to separate tasks. However, since what defines a task is rather ambiguous (i.e.~a group of tasks could be thought of as a single task), this leads to a notion of contextuality of learning models in the single task setting as well. Since this is quite a subtle point that will benefit from understanding the multi-task case, we defer the discussion and the corresponding definition for single-task models until Section \ref{sec:relative}.

We now imagine a user interacting with a trained multi-task model $h(\theta)$. Our definition of contextuality will apply to $h(\theta)$ rather than be a property of the entire model class $H$. The reason for this is that we aim to connect contextuality to the behaviour of the model at inference time, which is the job of a single trained multi-task model, rather than the entire class\footnote{That is not to say that definitions of contextuality that take into account the full model class are not interesting, however the interpretation would be somewhat different.}. With this choice we may now define the corresponding operational scenario in which to study contextuality. Following the logic of Section \ref{sec:contextuality}, this
consists of all the actions and observations related to a user interacting with the model $h(\theta)$ at inference time. \vspace{5pt}

\begin{defo}[Operational scenario of a multi-task model]{def:prepsmeas}
An operational scenario of a multi-task model is a triple $ \{\texttt{Preps},\texttt{Effects},P(\Eb\vert\Sb)\}$. The set \texttt{Preps} consists of preparations $\Sb_{\vec{x}}$, where $\Sb_{\vec{x}}$ corresponds to a procedure whereby a user inputs data $\vec{x}$ into the memory of the model. The set \texttt{Effects} consists of the effects $\Eb^k_y$, $\Omega$, $\emptyset$, where $\Eb^k_y$ corresponds to a procedure whereby the user queries the model for task $k$ and observes the label $y$, and  $\Omega$ and $\emptyset$ are the trivial and null effects respectively. The operational statistics are given by the model predictions $P(\Eb^k_{y}\vert\Sb_{\vec{x}})=\mathcal{P}_{\theta}^k(y\vert\vec{x})$.
\end{defo}
This choice of operational scenario then leads to a natural definition of noncontextuality for multi-task models. \vspace{5pt}
\begin{defo}[Noncontextual multi-task model]{def:opcontmodel}
A multi-task model
\begin{align}
   h(\theta)= \{\mathcal{P}_{\theta}^1(y\vert\vec{x}),\mathcal{P}_{\theta}^2(y\vert\vec{x}),\cdots, \mathcal{P}_{\theta}^m(y\vert\vec{x})\}
\end{align}\vspace{5pt}
is noncontextual iff the operational statistics $P(\Eb\vert \Sb)$ of the associated operational scenario (given by Definition \ref{def:prepsmeas}) admits a noncontextual ontological model. If such an ontological model does not exist, the learning model is said to be contextual. 
\end{defo}
With this definition, we now define a noncontextual model class to be any class that contains only noncontextual multi-task models.\vspace{5pt}

\begin{defo}[Noncontextual multi-task model class]{def:ncclass}
A multi-task model class
\begin{align}
   H = \{\{\mathcal{P}_{\theta}^1(y\vert\vec{x}),\mathcal{P}_{\theta}^2(y\vert\vec{x}),\cdots, \mathcal{P}_{\theta}^m(y\vert\vec{x})\}\}_{\theta}
\end{align}\vspace{5pt}
is noncontextual if every multi-task model in the class is noncontextual in the sense of Definition \ref{def:opcontmodel}. 
\end{defo}
We remark that the theory-independent approach we have taken above is different to the common perspective of generalised contextuality, where the operational scenario corresponds to the entire underlying physical theory. In Appendix \ref{app:def} we expand on this point and explain why our approach is necessary to arrive at a reasonable definition of contextuality for machine learning.

\section{Inductive bias and limits of expressivity}\label{sec:limits}
In this section we present our main theoretical insights by connecting the concept of inductive bias in machine learning to generalised contextuality. We use upper case Roman characters to denote random variables and lower case to denote fixed values of that variable, i.e., \ $P(a\vert x)\equiv P(A=a\vert X=x)$. 
Superscript in parentheses refers to a particular element of a vector, i.e., $\vec{x}^{(k)}$ is the $k^{\text{th}}$ element of $\vec{x}$. 

\subsection{Summary of main result}\label{sec:main}
We first give a summary of our main result, leading to Theorem \ref{thm:main}. We focus on a class of learning problems with the following form (see also Figure \ref{fig:mainfig}).

\begin{probo}[Statistical learning problems with linear bias]{def:genproblem}
Consider a data set 
\begin{align}
    \chi = \{\vec{x}_i, \vec{y}_i\}
\end{align}
of input data points $\vec{x}_i\in\mathbb{R}^d$ and labels $\vec{y}_i\in\{-1,1\}^{m}$ sampled from a data distribution $\mathcal{D}(\vec{x}, \vec{y})$ that satisfies a linear constraint
\begin{align}\label{lincon}
\mathbb{E}[Y^{(1)}\vert \vec{x}] + \mathbb{E}[Y^{(2)}\vert \vec{x}] + \cdots + \mathbb{E}[Y^{(m)}\vert \vec{x}] = 0 \quad \forall \vec{x},
\end{align}
where $Y^{(k)}$ is the random variable corresponding to the $k^{\text{th}}$ element of $\vec{y}$. Construct a multi-task model 
\begin{align}
     h(\theta) = \{\mathcal{P}_{\theta}^1(y\vert\vec{x}),\mathcal{P}_{\theta}^2(y\vert\vec{x}),\cdots, \mathcal{P}_{\theta}^m(y\vert\vec{x})\}
\end{align}
that samples from the $m$ marginal conditional distributions $\mathcal{D}(y^{(k)}\vert \vec{x})$ $(k=1,\cdots,m)$ for unseen instances of $\vec{x}$.
\end{probo}
%

\begin{figure}
    \centering
    \includegraphics[scale=0.7]{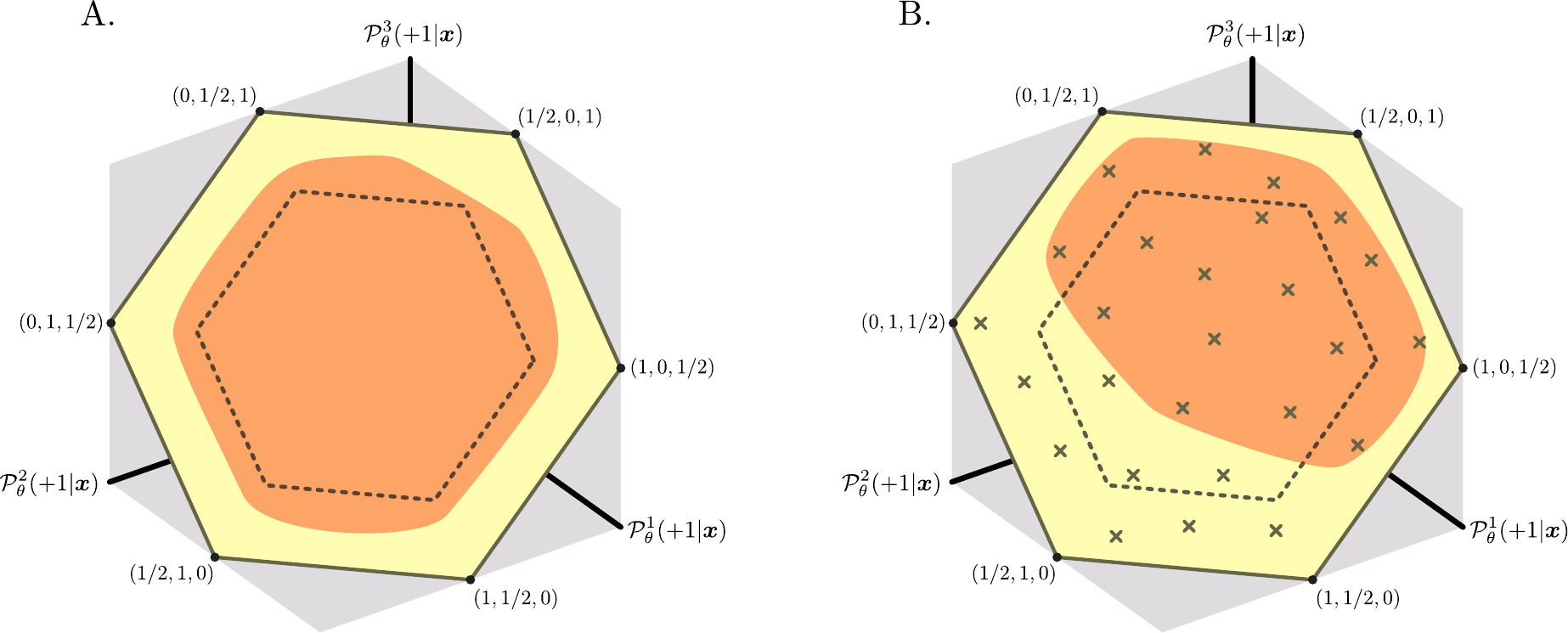}
    \caption{A. If a multi-task model encodes the bias \eqref{bias}, the behaviours $\vec{v}_{\vec{x}} = (\mathcal{P}_\theta^1(+1\vert \vec{x}),\mathcal{P}_\theta^2(+1\vert \vec{x}),\mathcal{P}_\theta^3(+1\vert \vec{x}))$ must lie in the yellow hexagon formed by the convex hull of six extremal points. The orange space denotes the convex hull of $V$, the set of possible behaviours for some specific multi-task model. If this space lies outside of the dashed inner hexagon (described in Theorem \ref{thm:main}), the multi-task model is necessarily contextual in the sense of Definition \ref{def:opcontmodel}. B. The crosses mark the ground truth behaviours of inputs from a test data set. Since the convex hull of the set of behaviours of a noncontextual model  (orange space) cannot contain the inner hexagon, any such model will be unable to sample the correct label distribution for all inputs in the test set. In this case, the noncontextual model will fail to approximately sample the correct label distributions for those test points with behaviours far from the orange space.}
    \label{fig:thmfig}
\end{figure}

The constraint \eqref{lincon} can be seen as a statistical conservation law on the label space: for any $\vec{x}$ the sum of their expectation values is equal to a constant. As we will see in section \ref{sec:linking}, noncontextual model classes that encode biases of this form have limited expressivity, which may impact learning from certain data sets. This is due to the inductive bias acting as an operational equivalence (Section \ref{sec:opeq}) in the contextuality scenario, which amounts to a constraint on every model in the class through the principle of generalised contextuality (Section \ref{sec:gencon}).

To illustrate how this works in practice, we study the particular case of Problem \ref{def:genproblem} for $m=3$. Consider a data set $\chi$ as described in problem \ref{def:genproblem}, and a multi-task model class $H$.
We assume that the model class encodes the bias \eqref{lincon} that is present in the data, so that \eqref{lincon} holds for each multi-task model $h(\theta)$ in $H$. Translating the expectation values $\mathbb{E}[Y^{(k)}\vert \vec{x}]= \mathcal{P}_\theta^k(+1\vert \vec{x})-\mathcal{P}_\theta^k(-1\vert \vec{x})$ into probabilities, this implies
%
\begin{align}
    \mathcal{P}_\theta^1(+1\vert \vec{x})+\mathcal{P}_\theta^2(+1\vert \vec{x})+\mathcal{P}_\theta^3(+1\vert \vec{x}) = \mathcal{P}_\theta^1(-1\vert \vec{x})+\mathcal{P}_\theta^2(-1\vert \vec{x})+\mathcal{P}_\theta^3(-1\vert \vec{x}) \quad \forall\,\theta\;\forall\,\vec{x},
\end{align}
or using $\mathcal{P}_\theta^k(+1\vert \vec{x})=1-\mathcal{P}_\theta^k(-1\vert \vec{x})$,
\begin{align}\label{bias}
    \mathcal{P}_\theta^1(+1\vert \vec{x})+\mathcal{P}_\theta^2(+1\vert \vec{x})+\mathcal{P}_\theta^3(+1\vert \vec{x}) = \frac{3}{2} \quad \forall \theta\;\forall\,\vec{x}.
\end{align}

In order to describe how the bias \eqref{bias} leads to constraints on noncontextual models, we now give a geometric interpretation to each multi-task model in the class. Consider a specific multi-task model (with fixed parameters $\theta$) and define the vector 
\begin{align}
\vec{v}_{\vec{x}} = (\mathcal{P}_\theta^1(+1\vert \vec{x}),\mathcal{P}_\theta^2(+1\vert \vec{x}),\mathcal{P}_\theta^3(+1\vert \vec{x})).
\end{align}
We call $\vec{v}_{\vec{x}}$ a \emph{behaviour} for input $\vec{x}$. Since $\mathcal{P}_\theta^k(-1\vert \vec{x})=1-\mathcal{P}_\theta^k(+1\vert \vec{x})$, the set of possible behaviours that can be produced by the model is
\begin{align}\label{vset}
    V = \{\vec{v}_{\vec{x}}\vert \vec{v}_{\vec{x}} \in D_{\vec{x}}\},
\end{align}
where $D_x$ is the domain of the input data. Since each probability in $\vec{v}_{\vec{x}}$ is bounded in $[0,1]$, the set $V$ generally forms a three-dimensional subspace of the  cube. The bias \eqref{bias} however is a linear constraint that restricts $V$ to live in a  to a two-dimensional hexagon that is the convex hull of the six extremal behaviours  (see Figure \ref{fig:thmfig}A)
\begin{align}\label{vprobs}
    \vec{v}_1 = (1,0,\half), \quad \vec{v}_2 = (0,1,\half), \quad  \vec{v}_3 = (\half,1,0), \quad \vec{v}_4 = (\half,0,1),\quad
    \vec{v}_5 = (0,\half,1), \quad \vec{v}_6 = (1,\half,0). \quad
\end{align}
In Figure \ref{fig:thmfig}A, the orange space shows the convex hull of $V$ for such a model. We now consider another space $V_{\eta}$ that is the convex hull of the six behaviours 
\begin{align}
    \vec{u}_i(\eta) = \eta\vec{v}_i + (1-\eta)(\half,\half,\half) \quad i=1,\cdots,6
\end{align}
where $0<\eta<1$. This space also forms a (smaller) hexagon, which for $\eta=\frac{2}{3}$ is shown by the dashed lines in Figure \ref{fig:thmfig}A. As we will show shortly, if the set $V_{2/3}$ is strictly contained in the convex hull of $V$, then the corresponding multi-task model is necessarily contextual. This is made precise by the following theorem. \vspace{5pt}

\begin{theo}[Necessity of contextuality for biased multi-task models]{thm:main}
Consider a multi-task model $
   \{\mathcal{P}_{\theta}^1(y\vert\vec{x}),\mathcal{P}_{\theta}^2(y\vert\vec{x}),\mathcal{P}_{\theta}^3(y\vert\vec{x})\}
  $
that encodes the bias \eqref{bias}.\vspace{5pt}
If the set $V_{2/3}$ given by the convex hull of points
\begin{align}\label{points}
    \vec{u}_i(\eta) = \eta\vec{v}_i + (1-\eta)(\half,\half,\half) \quad i=1,\cdots,6
\end{align}
for $\eta=\frac{2}{3}$ is strictly contained the convex hull of 
\begin{align}
V = \{\vec{v}_{\vec{x}}\vert \vec{x} \in D_{\vec{x}}\} =  \{(\mathcal{P}_{\theta}^1(+1\vert \vec{x}),\mathcal{P}_{\theta}^2(+1\vert \vec{x}),\mathcal{P}_{\theta}^3(+1\vert \vec{x}))\vert \vec{x} \in D_{\vec{x}}\}
\end{align}
then the multi-task model is contextual in the sense of Definition \ref{def:opcontmodel}. 
\end{theo}

Phrased in the contra-positive, this implies that if a multi-task model encodes the bias \eqref{bias} and is noncontextual, the convex hull of possible behaviours cannot contain $V_{2/3}$. This amounts to a limit on the expressivity of noncontextual multi-task model classes, since each model in the class is subject to this restriction. For certain data sets this may impact the generalisation ability of the model class (see Figure \ref{fig:thmfig}B). For example, consider a test data set that contains a significant fraction of inputs $\vec{x}_i$ whose ground truth behaviours are sufficiently close to the six points $\vec{v}_i$. Then since the set of behaviours of any noncontextual model cannot contain $V_{2/3}$, all models in the class will struggle to approximately sample the correct label distributions on some of the inputs, which may hinder generalisation ability.

\subsection{Linking inductive bias to operational equivalence}\label{sec:linking}
In this subsection and the next we prove Theorem \ref{thm:main}. To do this we will connect the inductive bias of the model to the notion of operational equivalence of generalised contextuality. Our first task is to identify the relevant operational scenario. Following Definition \ref{def:prepsmeas}, the set $\texttt{Preps}$ contains the preparations $\Sb_{\vec{x}}$ that correspond to inputting a data point $\vec{x}\in D_x$ into the model. The set $\texttt{Effects}$ contains the trivial effect $\Omega$, null effect $\emptyset$, and the six effects $\Eb_{\pm}^1$, $\Eb_{\pm}^2$, $\Eb_{\pm}^3$, which correspond to observing the label $y=\pm1$ for the three tasks. The operational statistics are simply the model predictions
\begin{align}\label{opstats2}
    P(\Eb^k_y \vert\Sb_{\vec{x}}) = \mathcal{P}_{\theta}^k(y\vert\vec{x}) \quad k=1,2,3.
\end{align}

An operational equivalence connected to the bias \eqref{bias} can be found as follows. Imagine we choose one of the tasks uniformly at random and observe the label of that task. In the operational scenario this is described by a measurement
\begin{align}
   \Mb = \{e_+,e_-\}= \{ \frac{1}{3}\Eb^1_{+}+\frac{1}{3}\Eb^2_{+}+\frac{1}{3}\Eb^3_{+}, \frac{1}{3}\Eb^1_{-}+\frac{1}{3}\Eb^2_{-}+\frac{1}{3}\Eb^3_{-} \} .
\end{align}
Let us look at the probability of obtaining the first outcome of this measurement when performed on some preparation $\Sb_{\vec{x}}$. Since the bias is satisfied, it follows from \eqref{bias} that 
\begin{align}
    P(e_+\vert\Sb_{\vec{x}})&=\frac{1}{3}(P(\Eb^1_{+}\vert\Sb_{\vec{x}})+P(\Eb^2_{+}\vert\Sb_{\vec{x}})+P(\Eb^3_{+}\vert\Sb_{\vec{x}})) \nonumber \\
    &=  \frac{1}{3}(\mathcal{P}_\theta^1(+1\vert \vec{x})+\mathcal{P}_\theta^2(+1\vert \vec{x})+\mathcal{P}_\theta^3(+1\vert \vec{x})) = \frac{1}{2} \quad \forall\;\Sb_{\vec{x}}.
\end{align}
Similarly, the probability of the second outcome is 
\begin{align}
     P(e_-\vert\Sb_{\vec{x}})&=\frac{1}{3}(P(\Eb^1_{-}\vert\Sb_{\vec{x}})+P(\Eb^2_{-}\vert\Sb_{\vec{x}})+P(\Eb^3_{-}\vert\Sb_{\vec{x}})) \nonumber \\ &=  \frac{1}{3}(\mathcal{P}_\theta^1(-1\vert \vec{x})+\mathcal{P}_\theta^2(-1\vert \vec{x})+\mathcal{P}_\theta^3(-1\vert \vec{x})) \nonumber\\
    &= 1- \frac{1}{3}(\mathcal{P}_\theta^1(+1\vert \vec{x})+\mathcal{P}_\theta^2(+1\vert \vec{x})+\mathcal{P}_\theta^3(+1\vert \vec{x})) = \frac{1}{2} \quad \forall\;\Sb_{\vec{x}}.
\end{align}
This means that if we chose a task at random, we are equally likely to observe a positive or negative label. The two effect densities $e_+$ and $e_-$ in this measurement are therefore operationally equivalent\footnote{One could use the operational equivalence $\frac{1}{3}\Eb^1_{+}+\frac{1}{3}\Eb^2_{+}+\frac{1}{3}\Eb^3_{+} \sim \frac{1}{2}\Omega + \frac{1}{2}\emptyset$ here instead to the same effect.},
\begin{align}\label{opeqv_meas}
    \frac{1}{3}\Eb^1_{+}+\frac{1}{3}\Eb^2_{+}+\frac{1}{3}\Eb^3_{+} \sim \frac{1}{3}\Eb^1_{-}+\frac{1}{3}\Eb^2_{-}+\frac{1}{3}\Eb^3_{-}.
\end{align}

An operational equivalence related to the preparations can be found as follows. Since the set $V_{2/3}$ is strictly contained in the convex hull of $V$, it follows that the convex hull of $V$ contains the six points $\vec{u}_i(\eta)$ for some $\eta>\frac{2}{3}$. We therefore have 
\begin{align}\label{stime}
    \vec{u}_i(\eta)=(P(\Eb_+^1\vert s_i),P(\Eb_+^2\vert s_i),P(\Eb_+^3\vert s_i)) \quad i=1,\cdots,6
\end{align}
for some preparation densities $s_1,\cdots,s_6$. Note that we also have 
\begin{align}\label{uvecs}
    \frac{1}{2}( \vec{u}_1(\eta) +  \vec{u}_2(\eta))=  \frac{1}{2}( \vec{u}_3(\eta) +  \vec{u}_4(\eta)) =  \frac{1}{2}( \vec{u}_5(\eta) +  \vec{u}_6(\eta))=(\frac{1}{2},\frac{1}{2},\frac{1}{2}).
\end{align}
 It follows from  \eqref{add2} that
 \begin{align}
(P(\Eb_+^1\vert \half s_i+\half s_{i+1}),P(\Eb_+^2\vert \half s_i+\half s_{i+1}),P(\Eb_+^3\vert \half s_i+\half s_{i+1})) = (\frac{1}{2},\frac{1}{2},\frac{1}{2})
 \end{align}
 for $i=1,3,5$. From this we can conclude that equally mixing any of the pairs of preparation densities $(s_1,s_2)$, $(s_3,s_4)$, or $(s_5,s_6)$ results in another preparation density that returns uniformly random labels for the three tasks. These mixtures are therefore indistinguishable and we have the operational equivalences
\begin{align}\label{opeqv_prep}
    \frac{1}{2}s_1+ \frac{1}{2}s_2\sim\frac{1}{2}s_3+ \frac{1}{2}s_4\sim\frac{1}{2}s_5+ \frac{1}{2}s_6. 
\end{align}

\subsection{Bounding noncontextual learning models}
Assuming noncontextuality, we now derive constraints on the operational statistics via the operational equivalences \eqref{opeqv_meas} and \eqref{opeqv_prep}, leading to the result of Theorem \ref{thm:main}. To do this, we use the following lemma. \vspace{5pt}

\begin{lem}[Noncontextuality inequality]{lemma}
Consider an operational scenario of a multi-task model for Problem \ref{def:genproblem} with $m=3$, that satisfies the operational equivalences \eqref{opeqv_meas} and \eqref{opeqv_prep}. If the multi-task model is noncontextual, then the operational statistics satisfy the inequality
\begin{align}\label{ineq}
P(\Eb_+^1\vert s_1)+P(\Eb_+^2\vert s_3)+P(\Eb_+^3\vert s_5)\leq \frac{5}{2},
\end{align}
\end{lem}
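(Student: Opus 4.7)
The plan is to apply generalised noncontextuality to convert the two operational equivalences \eqref{opeqv_meas} and \eqref{opeqv_prep} into pointwise equalities at the ontological level, and then bound the integrand of the left-hand side of \eqref{ineq} pointwise on the ontic space $\Lambda$.

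First I would use \eqref{onteq2} together with additivity on the effect equivalence \eqref{opeqv_meas} to deduce that $\xi_{\Eb^1_+}(\lambda) + \xi_{\Eb^2_+}(\lambda) + \xi_{\Eb^3_+}(\lambda) = \xi_{\Eb^1_-}(\lambda) + \xi_{\Eb^2_-}(\lambda) + \xi_{\Eb^3_-}(\lambda)$ for all $\lambda$. Combined with the response-function normalisation $\xi_{\Eb^k_+}(\lambda) + \xi_{\Eb^k_-}(\lambda) = 1$ of each measurement $\Mb^k$ (itself an easy consequence of the equivalence $\half\Eb^k_+ + \half\Eb^k_- \sim \half\Omega + \half\emptyset$ and \eqref{onteq2}), this yields the pointwise identity $f_1(\lambda) + f_2(\lambda) + f_3(\lambda) = \tfrac{3}{2}$, writing $f_k := \xi_{\Eb^k_+}$. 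Analogously, applying \eqref{onteq1} to the preparation equivalence \eqref{opeqv_prep} produces a common density $\mu_*(\lambda) := \half\mu_{s_1}(\lambda) + \half\mu_{s_2}(\lambda) = \half\mu_{s_3}(\lambda) + \half\mu_{s_4}(\lambda) = \half\mu_{s_5}(\lambda) + \half\mu_{s_6}(\lambda)$, and non-negativity of the complementary $\mu_{s_{2k}}$ then forces the pointwise bound $\mu_{s_{2k-1}}(\lambda) \leq 2\mu_*(\lambda)$ for each $k = 1, 2, 3$.

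Next I would use \eqref{ontmodel} to rewrite the left-hand side of \eqref{ineq} as $\int \mathrm{d}\lambda \, [\mu_{s_1}(\lambda) f_1(\lambda) + \mu_{s_3}(\lambda) f_2(\lambda) + \mu_{s_5}(\lambda) f_3(\lambda)]$ and maximise the integrand pointwise in $\lambda$ over the feasible set for $(f_1, f_2, f_3)$. This set is precisely the hexagon whose extreme points are the six permutations of $(1, \half, 0)$ (the same six behaviours $\vec{v}_i$ appearing in \eqref{vprobs}), so the linear functional is maximised at a vertex, taking the value $M(\lambda) + m(\lambda)/2$, where $M(\lambda)$, $m(\lambda)$, $\ell(\lambda)$ denote respectively the largest, median, and smallest of the three numbers $\mu_{s_1}(\lambda), \mu_{s_3}(\lambda), \mu_{s_5}(\lambda)$.

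The conclusion then follows from the elementary identity $M + m/2 = \half(M + m + \ell) + \half(M - \ell)$. Integrating, the first piece gives $\half \sum_{k} \int \mathrm{d}\lambda \, \mu_{s_{2k-1}}(\lambda) = 3/2$ by normalisation, while $M(\lambda) - \ell(\lambda) \leq 2\mu_*(\lambda)$ (using $M \leq 2\mu_*$ and $\ell \geq 0$) bounds the second piece by $\int \mathrm{d}\lambda \, \mu_*(\lambda) = 1$, delivering \eqref{ineq}. The step I expect to need most care is the pointwise maximisation: one must verify that the six extremal response functions really do dominate all other choices, but since the objective is linear in $(f_1, f_2, f_3)$ and the feasible region is a two-dimensional hexagon in the hyperplane $\sum_k f_k = 3/2$, a direct comparison of the six vertex values under the WLOG ordering $\mu_{s_1} \geq \mu_{s_3} \geq \mu_{s_5}$ pins down the maximum as $\mu_{s_1} + \mu_{s_3}/2$ with no further work.
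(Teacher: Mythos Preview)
Your proof is correct and takes a genuinely different route from the paper's. The paper first decomposes each response-function vector $\vec{\xi}(\lambda)$ as a convex combination $\sum_i p_\lambda(i)\vec{\xi}^i$ of the six hexagon vertices, then introduces the coarse-grained variables $\nu_j(i)=\int\mathrm{d}\lambda\,\mu_{s_j}(\lambda)p_\lambda(i)$, reducing the problem to a finite linear program in thirty-six nonnegative variables subject to normalisation and the preparation-equivalence constraints $\nu_1(i)+\nu_2(i)=\nu_3(i)+\nu_4(i)=\nu_5(i)+\nu_6(i)$; the bound $\tfrac{5}{2}$ is then extracted through a somewhat intricate chain of algebraic inequalities on these variables. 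You instead work directly on the ontic space, bound the integrand $\mu_{s_1}f_1+\mu_{s_3}f_2+\mu_{s_5}f_3$ pointwise by its hexagon maximum $M(\lambda)+\tfrac12 m(\lambda)$, and then integrate using the splitting $M+\tfrac{m}{2}=\tfrac12(M+m+\ell)+\tfrac12(M-\ell)$ together with normalisation and the single inequality $M\le 2\mu_*$. The paper's LP reduction is more systematic---it connects to quantifier elimination and in principle delivers all facet inequalities of the noncontextual polytope---whereas your argument is shorter and more transparent for this particular inequality, and notably exploits the preparation equivalence only through the crude pointwise bound $\mu_{s_{2k-1}}\le 2\mu_*$ rather than the full equality $\mu_{s_{2k-1}}+\mu_{s_{2k}}=2\mu_*$. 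One minor remark: the hexagon of response-function triples $(f_1,f_2,f_3)$ coincides with the behaviour hexagon $\{\vec{v}_i\}$ of \eqref{vprobs} because both are cut out by the same constraints, but conceptually they live in different spaces (effect side versus preparation side); your identification is correct, just worth a word of clarification.
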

\begin{proof}
Consider a noncontextual ontological model of the operational scenario. This consists of ontic state distributions $\mu_s(\lambda)$ for every preparation density $s$ and ontic response functions $\xi_{e}(\lambda)$ for every effect density $e$. The effect densities are convex mixtures of the eight effects  $\Eb^1_{+},\Eb^2_{+},\Eb^3_{+},\Eb^1_{-},\Eb^2_{-},\Eb^3_{-},\Omega,\emptyset$. We define the vectors 
\begin{align}\label{vecxi}
    \vec{\xi}(\lambda) = (\xi_{\Eb^1_+}(\lambda),\xi_{\Eb^2_+}(\lambda),\xi_{\Eb^3_+}(\lambda),\xi_{\Eb^1_-}(\lambda),\xi_{\Eb^2_-}(\lambda),\xi_{\Eb^3_-}(\lambda),\xi_{\Omega}(\lambda),\xi_{\emptyset}(\lambda))
\end{align}
for each $\lambda$. These vectors are constrained as follows. From the properties of the trivial and null effect we have $\xi_{\Omega}(\lambda)=1$ and $\xi_{\emptyset}(\lambda)=0$ for all $\lambda$. We also have the positivity constraints $0\leq\xi_{\Eb^k_{\pm}}(\lambda)\leq 1$ and constraints stemming from the normalisation conditions $\xi_{\Eb^k_{+}}(\lambda)+\xi_{\Eb^k_{-}}(\lambda)=1$ for all $\lambda$. Finally, due to noncontextuality, the operational equivalence \eqref{opeqv_meas} implies the constraints 
\begin{align}\label{eq:xiequiv}
\xi_{\Eb^1_+}(\lambda)+\xi_{\Eb^2_+}(\lambda) +\xi_{\Eb^3_+}(\lambda) = \xi_{\Eb^1_-}(\lambda)+\xi_{\Eb^2_-}(\lambda) +\xi_{\Eb^3_-}(\lambda)\quad \forall \lambda. 
\end{align}
Combining these constraints we find that the vectors $\vec{\xi}(\lambda)$ can be characterised by the two-dimensional hexagon in Fig.~\ref{fig:mmtpolytope}. Each point in the polygon gives the values $\xi_{\Eb^1_+}(\lambda)$ and $\xi_{\Eb^2_+}(\lambda)$ with the other entries of $\vec{\xi}(\lambda)$ following from Eq.~\eqref{eq:xiequiv} and the normalisation conditions.
\begin{figure}
    \centering
    \includegraphics[scale=0.8]{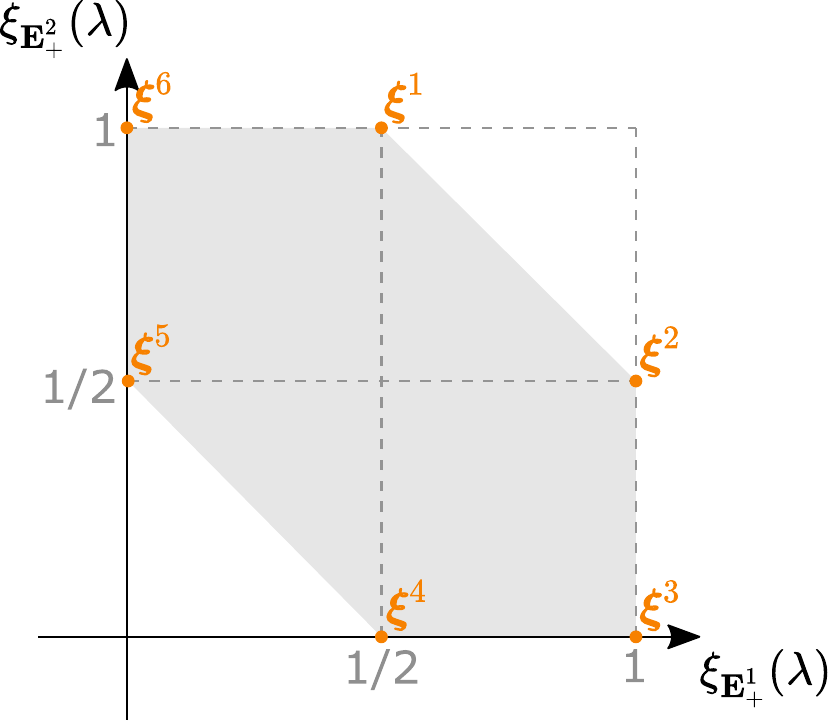}
    \caption{The six extremal vectors $\vec{\xi}_i$ can be characterised by the first two values, $\xi_{\Eb_+^1}(\lambda)$ and $\xi_{\Eb_+^2}(\lambda)$ in \eqref{vecxi}.}
    \label{fig:mmtpolytope}
\end{figure}
We therefore have 
\begin{align}
    \vec{\xi}(\lambda) =\sum_{i=1}^{6} p_\lambda(i)\vec{\xi}^i
\end{align}
 where $p_\lambda(i)$ is a probability distribution for each $\lambda$. Denoting by $\xi^i_{\Eb}$ the element of $\vec{\xi}^i$ corresponding to the effect $\Eb$, the operational statistics for preparation density $s$ and effect $\Eb$ predicted by the ontological model are
 \begin{align}
    P(\Eb \vert s) = \int_\Lambda \text{d}\lambda \mu_s(\lambda)\sum_i p_\lambda(i)\xi^i_{\Eb} &=  \sum_i\left(\int_\Lambda \text{d}\lambda \mu_s(\lambda) p_\lambda(i)\right) \xi^i_{\Eb} \nonumber\\
    &=: \sum_i\nu_{s}(i)\; \xi^i_{\Eb}, \label{linprobs}
 \end{align}
Where we have defined $\nu_s(i) := \int_\Lambda \text{d}\lambda \mu_s(\lambda) p_\lambda(i)$. Note that we have
\begin{align}\label{linprobscon2}
\sum_i\nu_s(i)=1, \quad \nu_s(i)\geq 0,
\end{align}
and by noncontextuality and the operational equivalence \eqref{opeqv_prep},
\begin{align}\label{linprobscon}
    \nu_{1}(i)+\nu_{2}(i)=\nu_{3}(i)+\nu_{4}(i)=\nu_{5}(i)+\nu_{6}(i),
\end{align}
where $\nu_j=\nu_{s_j}$.
A decomposition \eqref{linprobs} satisfying \eqref{linprobscon2} and \eqref{linprobscon} is necessary and sufficient for the existence of a noncontextual ontological model, since the decomposition \eqref{linprobs} can itself be seen as such a model by viewing the index $i$ as an ontic state. Together, \eqref{linprobs}, \eqref{linprobscon2} and \eqref{linprobscon} therefore give us a characterisation of the set of possible noncontextual statistics for the preparation densities $s_j$ $(j=1,\cdots,6)$ that consists of linear equalities and inequalities in the finite set of variables $\nu_{j}(i)$. From these, it follows that the inequality \eqref{ineq} is satisfied by the operational statistics of any multi-task learning model. This can be proven via linear quantifier elimination methods as in \cite{allnc}; the list of all such inequalities for this set of operational equivalences can be found in \cite{allnc,vickymate}. Here we give a pen-and-paper proof of the inequality, which may itself be of interest to the contextuality community. 

First, using the notation $\nu_{s_i}\equiv\nu_i$ and using \eqref{linprobs}, \eqref{linprobscon2}, we express the left hand side of the inequality \eqref{ineq} in terms of the variables $\nu_j$, which gives 
\begin{align}\label{nuineq}
    &\sum_{j=2,3}\nu_1(j)-\nu_5(j)+\sum_{k=1,6}\nu_3(k)-\nu_5(k) \nonumber\\ &\quad\quad\quad+\frac12\left[\nu_1(1)+\nu_1(4)+\nu_3(2)+\nu_3(5)-\sum_{l=1,2,4,5}\nu_5(l)\right] +\frac{3}{2}.
\end{align}
Using the normalisation and non-negativity constraints we can upper bound the following three quantities in the square brackets as follows:
\begin{align}
    &\nu_1(1)+\nu_1(4)\leq1-\nu_1(2)-\nu_1(3),\\
    &\nu_3(2)+\nu_3(5)\leq1-\nu_3(1)-\nu_3(6), \quad\text{ and}\\[4pt]
    -&\sum_{l=1,2,4,5}\nu_5(l)\leq-\sum_{l=4,5}\nu_5(l)=\sum_{l=1,2,3,6}\nu_5(l)-1.
\end{align} 
Rearranging, we obtain the following upper bound for the expression in \eqref{nuineq}:
\begin{align}\label{eq:newbound}
    \frac12+\frac12\left[\nu_3(1)+\nu_1(2)+\nu_1(3)+\nu_3(6)-\sum_{j=1,2,3,6}\nu_5(j)\right]+\frac{3}{2}.
\end{align}
From the operational equivalences \eqref{linprobscon} and non-negativity we have 
\begin{align}
\sum_{j=1,2,3,6}\nu_5(j)+\nu_6(j)=&\nu_3(1)+\nu_4(1)+\nu_1(2)+\nu_2(2)+\nu_1(3)+\nu_2(3)+\nu_3(6)+\nu_4(6)\\
    \geq&\nu_3(1)+\nu_1(2)+\nu_1(3)+\nu_3(6).
\end{align}
It follows that 
\begin{align}
    1\geq\sum_{j=1,2,3,6}\nu_6(j)
    \geq\nu_3(1)+\nu_1(2)+\nu_1(3)+\nu_3(6)-\sum_{j=1,2,3,6}\nu_5(j).
\end{align}
Using this in \eqref{eq:newbound} we obtain the bound $\frac{5}{2}$, which proves the inequality \eqref{ineq}. 
\end{proof}

We now consider a multi-task model described in Theorem \ref{thm:main}. From \eqref{stime} and the definition of $\vec{u}_i(\eta)$ one finds that the left hand side of the inequality \eqref{ineq} is
\begin{align}
    3\eta +\frac{3}{2}(1-\eta)= \frac{3}{2}(\eta+1).
\end{align}
This results in a violation of the inequality for $\eta>\frac{2}{3}$, implying contextuality of the learning model and therefore proving the claim of Theorem \ref{thm:main}.  We note that \eqref{ineq} more constraints on the model that are not covered by Theorem \ref{thm:main} or the inequality \eqref{ineq} can be found in \cite{allnc, vickymate}. The advantage of the phrasing of Theorem \ref{thm:main} however is that it can be understood easily via the geometric perspective of Figure \ref{fig:thmfig}. We also present a proof for the zero-noise case ($\eta=1$) in Appendix \ref{app:zeronoise}. Although less general, this proof may be useful to gain more intuition on the limitations of noncontextual models.

\section{Learning the rock, paper, scissors game}\label{sec:rps}
We now introduce a specific toy data set and learning problem---based on learning the payoff behaviour of a zero-sum game---to which Theorem \ref{thm:main} applies. Aside from bringing life to Theorem~\ref{thm:main}, this problem will also be the subject of the numerical investigation of section \ref{sec:surrogate}.

\subsection{The game}\label{rpsdata}
The data set is based on the well known rock, paper, scissors (RPS) game. Our game features three players that play a variant of the RPS game via a referee (Figure \ref{fig:scenario}A). In every round, each player plays either rock ($\R$), paper ($\P$) or scissors ($\S$), and we denote the choice of actions of the three players by the vector $\vec{a}\in\{\R,\P,\S\}^3$. Each player also has a special action whose role we will describe shortly. For Player 1, the special action is $\R$, for Player 2 it is $\P$ and for Player 3 it is $\S$. Following the actions, each player either receives receives \euro 1 or must pay \euro 1 to the referee, which we refer to as their pay-off, and denote the three payoffs as $\vec{y}=(\pm1,\pm1,\pm1)$. The payoffs are decided probabilistically based on the actions of the players. More precisely, the payoff of Player $k$ is such that
\begin{align}\label{exppayoff}
    \mathbb{E}[Y^{(k)}] =\frac{\ell_k(\vec{a})}{2},
\end{align}
where $\ell_k(\vec{a})\in\{-2,-1,0,1,2\}$ is equal to the number of players that Player $k$ beat in that round minus the number of players that beat Player $k$. This rule ensures that if a player beats (or loses to) all others, they are sure to win (or lose) the round.  

The rules for who beats whom are as follows. If two players choose different actions, the rule is the well-known rule:
\begin{align}
    \R_k>\S_l, \quad \P_k>\R_l, \quad \S_k>\P_l \quad \forall\;k,l,
\end{align}
where $\R_k>\S_l$ means ``Player $k$ plays rock beats Player $l$ plays scissors''. If two players play the same action, one player beats the other if the common action is their special action. For example, if players 1 and 2 both play $\R$, then Player 1 beats Player 2 since $\R$ is Player 1's special action. These rules can be summarised as
\begin{align}
    \R_1>\R_k,\;\;k=2,3 \quad\quad \P_2>\P_k, \;\; k=1,3 \quad\quad \S_3>\S_k, \;\; k=1,2.
\end{align}
In any other case it is a draw. Note that this game satisfies a zero-sum condition. Mathematically, since we must have $\sum_k\ell_k(\vec{a})=0$ it follows from \eqref{exppayoff} that
\begin{align}\label{zerosumcon}
    \mathbb{E}[Y^1\vert\vec{a}]+\mathbb{E}[Y^2\vert \vec{a}]+\mathbb{E}[Y^3\vert \vec{a}]=0.
\end{align}

\begin{figure}
    \centering
    \includegraphics[scale=1.0]{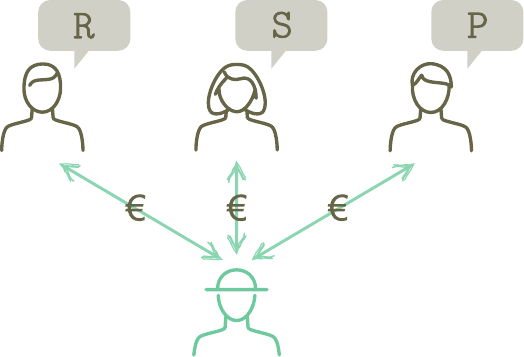}
    \caption{Three players play a variant of the rock, paper, scissors game with a referee. The referee determines (probabilistically) whether each player wins or loses the round based on their choice of action.}
    \label{fig:scenario}
\end{figure}

\subsection{The data set}
In the zero-sum game literature, it is common to introduce the concept of a \emph{strategy}. A strategy $\vec{x}(k)$ for Player $k$ is simply a list of probabilities that the player performs each possible action,
\begin{align}
    \vec{x}(k) = (P(A^{(k)}=\texttt{R}),P(A^{(k)}=\texttt{P}),P(A^{(k)}=\texttt{S})).
\end{align}
We denote by $\vec{x}\in D_{\vec{x}}$ the $3\times 3$ matrix whose $k^{\text{th}}$ row is $\vec{x}(k)$, where $D_{\vec{x}}$ is the space of valid strategy matrices, i.e.\ positive-valued matrices whose rows sum to $1$. Some of these strategies correspond to the six extremal behaviours in Figure \ref{fig:thmfig}. Consider for example the strategy
\begin{align}
\vec{x} = 
    \begin{pmatrix}
    1 & 0 & 0 \\
    0 & 0 & 1 \\
    0 & 0 & 1 
    \end{pmatrix},
\end{align}
i.e.\ a deterministic strategy where Player 1 plays $\texttt{R}$ and Player 2 and 3 play $\texttt{S}$. Following the rules of the game, the probabilities of a positive payoff for the three players are given by the vector $\vec{v}_1=(1,0,\frac{1}{2})$, which is one of the extremal behaviours in Figure \ref{fig:thmfig}. The other five extremal behaviours $\vec{v}_i$ are obtained by the deterministic strategies that correspond to the deterministic choice of actions $\texttt{SPP},\texttt{RPR},\texttt{SRS},\texttt{PPS}$ and $\texttt{RRP}$. 

The data set consists of pairs of strategies $\vec{x}$ and payoffs $\vec{y}$. We sample a number of strategies $\{\vec{x}_i\}$ following a distribution $
\mathcal{D}(\vec{x})$ that corresponds to generating a random positive matrix and normalising the rows (see Appendix \ref{app:numerics} for more details). The corresponding payoffs $\vec{y}_i$ are sampled according to the rules of the game given the strategy $\vec{x}_i$. That is, a choice of actions is sampled according to $\vec{x}_i$ and the corresponding pay-offs $\vec{y}_i$ is sampled via \eqref{exppayoff}. We therefore have the conditional distribution
\begin{align}\label{conprior}
    \mathcal{D}(\vec{y} \vert \vec{x}) = \sum_{\vec{a}\in\{\texttt{R},\texttt{P},\texttt{S}\}^3}P(\vec{a}\vert\vec{x})P(\vec{y}\vert\vec{a})
\end{align}
which defines our data distribution $\mathcal{D}(\vec{x},\vec{y})=\mathcal{D}(\vec{x})\mathcal{D}(\vec{y}\vert \vec{x})$. Note that the actions of the players are not explicit in the data, only the strategies $\vec{x}_i$. For a fixed $\vec{x}$, from \eqref{conprior} the conditional distribution $\mathcal{D}(\vec{y} \vert \vec{x})$ can be seen as a probabilistic mixture over the distributions $P(\vec{y}\vert \vec{a})$. It follows from \eqref{zerosumcon} that the data distribution $\mathcal{D}(\vec{x},\vec{y})$ satisfies the bias
\begin{align}\label{zerosumcon2}
    \mathbb{E}[Y^1\vert\vec{x}]+\mathbb{E}[Y^2\vert \vec{x}]+\mathbb{E}[Y^3\vert \vec{x}]=0,
\end{align}
for all $\vec{x}$, and is of the form described in Theorem \ref{thm:main}. Since this follows from the zero-sum property of the game, any data set constructed from a zero-sum game in the same fashion will satisfy \eqref{zerosumcon2}. 

\subsection{The learning problem}
We now describe a concrete problem that features the RPS data set. Consider a learning algorithm that has access to the RPS data $\{\vec{x}_i,\vec{y}_i\}$ and knows only that it was generated via a zero-sum game scenario, i.e.\ the bias \eqref{zerosumcon2} is assumed but the underlying game is unknown. The task we consider is to learn the pay-off behaviour of the game in the following sense. \vspace{5pt}
\begin{probo}[Learning the pay-off behaviour of a zero-sum game]{def:zerosumprob}
Consider a data set 
\begin{align}
\chi = \{\vec{x}_i,\vec{y}_i\}
\end{align}
of strategies $\vec{x}_i\in D_x$ and corresponding payoffs $\vec{y}_i\in\{\pm1\}^3$ sampled from a data distribution $\mathcal{D}(\vec{x},\vec{y})$ that respects the zero-sum bias \eqref{zerosumcon2}. Construct a multi-task model
\begin{align}
    \{\mathcal{P}_{\theta}^1(y\vert\vec{x}),\mathcal{P}_{\theta}^2(y\vert\vec{x}),\mathcal{P}_{\theta}^3(y\vert\vec{x})\}
\end{align}
that samples from the marginal payoff distributions $\mathcal{D}(y^{(k)}\vert \vec{x})$ for unseen strategies $\vec{x}$.
\end{probo}

We can now interpret Theorem \ref{thm:main} in the context of this learning problem. A multi-task model class that encodes the zero-sum bias contains models that are restricted to behaviours in the yellow hexagon of Figure \ref{fig:thmfig}B. Furthermore, if the model class is noncontextual, then the convex hull of these behaviours cannot contain the dashed hexagon. Thus, no noncontextual multi-task model that satisfies the zero-sum bias will be able to sample the correct payoff distributions for all possible strategies. In particular, if a test data set contains strategies close to the deterministic strategies corresponding to the choice of actions $\texttt{RSS}, \texttt{SPP},\texttt{RPR},\texttt{SRS},\texttt{PPS}$ and $\texttt{RRP}$, then, as described in Figure \ref{fig:thmfig}B, it will necessarily fail at sampling the correct payoff distribution for some strategies in the set, thus contributing to the expected generalisation error. We note that it is not too difficult to construct other data sets with the same structure by considering different scenarios in which a similar bias is present. In section \ref{sec:applications} we outline a number of these in more detail. 

\section{Contextuality of general learning models}\label{sec:relative}
Starting from the multi-task case, in this section we describe how one can arrive at a consistent definition of noncontextuality for single-task models (Definition \ref{def:ncbiasmodel}). We argue that in order to do this, it is more natural to think of contextuality relative to a particular bias of the model, rather than a property of the model as a whole. This is quite a subtle point that is important if trying to use our framework beyond this work. However, it is not necessary for understanding the rest of the paper, and some readers may wish to skip the section and return to it at a later time.

\subsection{Connecting multi-task and single-task models}
The definition of a noncontextual multi-task learning model (Definition \ref{def:opcontmodel}) considers all preparations and effects related to a user that interacts with the model at inference time. While this is a suitable perspective for multi-task problems such as \ref{def:zerosumprob}, if one wishes to define a consistent notion of noncontextuality that applies to more general, single-task models, considering the full set of preparations and effects can be overkill. To understand this issue, consider the modification to Problem \ref{def:genproblem} for $m=3$, in which the task is to learn the joint distribution of labels rather than the marginal distributions.  \vspace{5pt}
\begin{probo}[Learning a joint model with linear bias]{jointprob}
Consider a data set $\chi = \{\vec{x}_i,\vec{y}_i\}$ of input data $\vec{x}_i\in \mathbb{R}^d$ and label vectors $\vec{y}_i\in\{\pm1\}^3$ sampled from a data distribution $\mathcal{D}(\vec{x},\vec{y})$ satisfying 
\begin{align}\label{zerosumn}
    \mathbb{E}[Y^{(1)}\vert \vec{x}]+\mathbb{E}[Y^{(2)}\vert \vec{x}]+\mathbb{E}[Y^{(3)}\vert \vec{x}] = 0. 
\end{align}
Construct a model
    $\mathcal{P}_{\theta}(\vec{y}\vert\vec{x})$
that samples from the distribution $\mathcal{D}(\vec{y}\vert \vec{x})$ for unseen $\vec{x}$.
\end{probo}

We will call the model $\mathcal{P}_{\theta}(\vec{y}\vert\vec{x})$ a \emph{joint model} in order to contrast it to the multi-task model that is concerned only with the marginal statistics. A joint model can clearly be used to construct an associated multi-task model $
    \{\mathcal{P}_{\theta}^1(y\vert\vec{x}),\mathcal{P}_{\theta}^2(y\vert\vec{x}),\mathcal{P}_{\theta}^3(y\vert\vec{x})\}$
via three coarse-grainings that associate each element of the output label $\vec{y}$ to different tasks. In this way, a sufficiently expressive joint model can trivially simulate any multi-task model, and in practice one can use the joint model class to tackle Problem \ref{def:genproblem}. Suppose that a multi-task model constructed in this way satisfies the same operational equivalences \eqref{opeqv_prep} and \eqref{opeqv_meas} as before, and is contextual by virtue of violating the inequality \eqref{ineq}. Since this multi-task model is contextual and is obtained via a trivial post-processing of the joint model, it is natural to conclude that there is some form of contextuality present in the original joint model. However, if we blindly adopt Definition \ref{def:opcontmodel}, this is not necessarily the case. 

To see this, consider a user with access to the joint model $\mathcal{P}(\vec{y}\vert\vec{x})$. In the corresponding operational scenario the preparations are the same as before, however since we have a single task the effects are now
\begin{align}
    \{\textbf{E}_{+++},\Eb_{++-},\Eb_{+-+},\Eb_{+--},\Eb_{-++},\Eb_{-+-},\Eb_{--+},\Eb_{---}\}\cup \{\Omega,\emptyset\},
\end{align}
corresponding to the eight possible values of the label vector $\vec{y}$. Since the coarse-grained multi-task model satisfies the operational equivalence \eqref{opeqv_meas} there is an operational equivalence between the coarse-grained effects
\begin{align}
    \Eb_\pm^1 = \sum_{i,j\in\{+,-\}}\Eb_{\pm ij}, \quad \Eb_\pm^2 = \sum_{i,j\in\{+,-\}}\Eb_{i\pm j}, \quad \Eb_\pm^3 = \sum_{i,j\in\{+,-\}}\Eb_{ij\pm}
\end{align}
that correspond to observing a label $\pm1$ for each element of $\vec{y}$. We therefore have 
\begin{align}
    \Eb_+^1+\Eb_+^2+\Eb_+^3 \sim \Eb_-^1+\Eb_-^2+\Eb_-^3
\end{align}
as before. 

The issue arises when one considers the preparation equivalences $\eqref{opeqv_prep}$. 
Since the coarse-grained multi-task model satisfies $\eqref{opeqv_prep}$, it follows that 
\begin{align}\label{op1}
    P(\Eb_\pm^k\vert\half s_1+\half s_2)=    P(\Eb_\pm^k\vert \half s_3+\half s_4)
=    P(\Eb_\pm^k\vert \half s_5+\half s_6)
\end{align}
for $k=1,2,3$, i.e., the mixtures of preparations are indistinguishable if one observes marginal statistics only. However, in order to be operationally equivalent, the preparations must give identical statistics for all effects:
\begin{align}\label{op2}
    P(\Eb_{ijk}\vert \half s_1+\half s_2)=    P(\Eb_{ijk}\vert \half s_3+\half s_4)=    P(\Eb_{ijk}\vert \half s_5+\half s_6),\quad i,j,k\in\{\pm1\}.
\end{align}
Even if \eqref{op1} is satisfied, it may be the case that \eqref{op2} is not. This will be the case, for example, when the preparation densities $\frac{1}{2}s_1+\frac{1}{2}s_2$, $\frac{1}{2}s_3+\frac{1}{2}s_4$, and $\frac{1}{2}s_5+\frac{1}{2}s_6$ have identical marginal statistics, but differ in their joint statistics (see Appendix \ref{app:example} for an explicit example). In such cases we cannot conclude that
\begin{align}
   \frac{1}{2}s_1+\frac{1}{2}s_2\sim\frac{1}{2}s_3+\frac{1}{2}s_4\sim\frac{1}{2}s_5+\frac{1}{2}s_6.
\end{align}
As a result, the ontological model will lack the constraints implied by these equivalences, and we may be unable to prove contextuality\footnote{In fact, in quantum models, if one cannot identify any operational equivalences related to the preparations, there is always a noncontextual model \cite{gencon}.}. Considering the full set of preparations and effects therefore seems problematic in order to arrive at a consistent notion of contextuality across the single-task and multi-task settings. 

\subsection{Contextuality relative to a bias}
The moral of this story is that it may be more natural to think about contextuality relative to a particular bias of the model, rather than as a property of the model as a whole. Indeed, when studying contextuality it is always necessary to specify the procedures and observations that most accurately reflect the spirit of the nonclassicality one is wishing to probe, and different choices can lead to different conclusions \cite{mischa}. The solution to this conundrum therefore comes from considering contextuality with respect to those preparations and effects that play a role in a specific bias. In this way, one can identify a type of contextuality present in the operational statistics that is related to the non-existence of a noncontextual ontological model for these preparations and effects only. 

In particular, we may consider a general learning model $\mathcal{P}_
\theta(\vec{y}\vert\vec{x})$ with associated operational scenario $\{\texttt{Preps},\texttt{Effects},P(\Eb\vert\Sb)\}$ for which there exists an inductive bias that can be written as
\begin{align}\label{biasbias}
    \sum_{\Eb_i\in \mathfrak{E}_1} p_i P(\Eb_i\vert\Sb_{\vec{x}}) = \sum_{\Eb_j\in \mathfrak{E}_2} q_j P(\Eb_j\vert\Sb_{\vec{x}}) \quad \forall \Sb_{\vec{x}}
\end{align}
where $\mathfrak{E}_1, \mathfrak{E}_2 \subseteq \texttt{Effects}$ and $p_i, q_j$ are convex weights. This implies the operational equivalence 
\begin{align}
    \sum_{\Eb_i\in \mathfrak{E}_1}p_i\Eb_{i} \sim \sum_{\Eb_j\in \mathfrak{E}_1}q_j\Eb_j.
\end{align}
We can now define noncontextuality relative to the bias as follows. 
\begin{defo}[Noncontextual learning model relative to a bias]{def:ncbiasmodel}
Consider a learning model $\mathcal{P}(\vec{y}\vert\vec{x})$ with associated operational scenario 
\begin{align}
    \{\texttt{Preps},\texttt{Effects},P(\Eb \vert \Sb)\}
\end{align}
that satisfies an inductive bias of the form \eqref{biasbias}. The learning model is noncontextual relative to the bias \eqref{biasbias} if the operational statistics of the operational scenario
\begin{align}
\{\texttt{Preps},\mathfrak{E}_1\cup \mathfrak{E}_2,P(\Eb\vert \Sb)\}
\end{align}
admits a noncontextual ontological model.
\end{defo}
Using this definition, if the coarse-grained multi-task model is contextual in the sense of Definition \ref{def:opcontmodel}, the joint model is contextual relative to the bias \eqref{bias}, which unifies the multi-task and single-task approaches. We note that one could also consider biases that are satisfied by a subset of preparations, and consider a similar definition to the above involving these preparations only, but we do not consider this in this work.

\section{Encoding inductive bias into quantum learning models}\label{sec:encodingbias}
Until now, we have only been concerned with abstract representations of learning models in the form of preparations and effects of an operational scenario. In this section we present methods to construct  classes of multi-task quantum machine learning models $\{\{\mathcal{P}^k_\theta(y\vert\vec{x})\}_{k=1,\cdots,m}\}_\theta$ with $\vec{x}\in\mathbb{R}^{d}$ and $y\in\{-1,+1\}$ that encode the inductive bias
\begin{align}\label{linbias}
\mathbb{E}[Y^{(1)}\vert \vec{x}] + \mathbb{E}[Y^{(2)}\vert \vec{x}] + \cdots + \mathbb{E}[Y^{(m)}\vert \vec{x}] = 0 \quad \forall \vec{x}
\end{align}
for all parameters $\theta$ in the class. We consider a bias of the above form due to its simplicity and relevance to this work, but much of what we say can be straightforwardly extended to biases expressed as general linear combinations of the probabilities $\mathcal{P}_{\theta}^k(y\vert\vec{x})$ for arbitrary finite label dimension, and even to certain forms of nonlinear biases (as we explain in Section \ref{sec:nonlinear}). We remind the reader that since quantum theory is known to be contextual in the generalised sense \cite{gencon,vickymate}, the constraints on noncontextual models related to this bias do not apply to quantum models.

The type of quantum models we consider correspond to parameterised quantum circuits \cite{pqc1},
\begin{align}
    \mathcal{P}^k_{\vec{\theta}}(y\vert\vec{x}) = \bra{\psi_0}U^\dagger(\vec{x},\vec{\theta}) \mathcal{M}_{y\vert k} U(\vec{x},\vec{\theta})\ket{\psi_0},
\end{align}
where $\ket{\psi_0}$ is some $n$-qubit initial state, $U$ is a parameterised unitary that depends on the input data and some trainable parameters $\vec{\theta}$, and  $\{\{\mathcal{M}_{+1\vert k},\mathcal{M}_{-1\vert k}\}\}_k$ is a set of two-outcome measurements that sample the label for task $k$. Denoting by $\mathcal{O}_k = \mathcal{M}_{+1\vert k}-\mathcal{M}_{-1\vert k}$ the corresponding observable for task $k$, from \eqref{linbias} our aim is to find choices of $\mathcal{O}_k$, $\ket{\psi_0}$ and $U$ such that 
\begin{align}\label{biasq}
\sum_k \bra{\psi_0}U^\dagger(\vec{x},\vec{\theta}) \mathcal{O}_{k} U(\vec{x},\vec{\theta})\ket{\psi_0} = \bra{\psi_0}U^\dagger(\vec{x},\vec{\theta})\mathcal{H} U(\vec{x},\vec{\theta})\ket{\psi_0} =  0 \quad \forall \vec{x},\vec{\theta},
\end{align}
where we call $\mathcal{H}=\sum_k \mathcal{O}_{k}$ the \emph{bias operator}. Note that from the above, $\mathcal{H}$ has to have at least one non-positive eigenvalue.  We explore two methods that we call the state-based and measurement-based approaches. As we will see, the linearity of quantum theory plays an important role, and provides a rather natural way of encoding such biases. 

\begin{figure}
    \centering
    \includegraphics{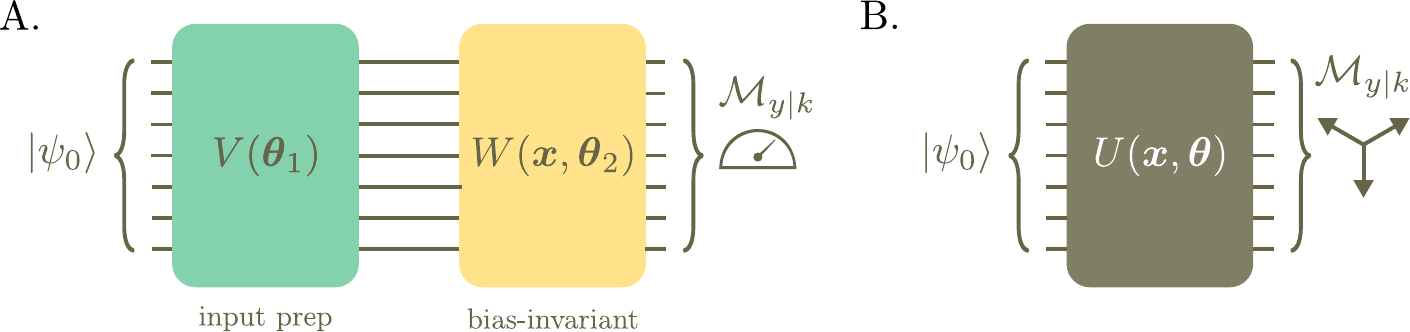}
    \caption{Two methods for encoding the inductive bias \eqref{linbias} into a quantum machine learning model. A.\ The state-based approach. The first unitary creates states that satisfy the bias for the chosen measurement operators $\mathcal{M}_{y\vert k}$. The second unitary is a bias-invariant layer that performs rotations that preserve the bias for all input states. B.\ The measurement-based approach. Here, an arbitrary input state and unitary can be chosen, but the measurement observables are chosen to have a specific structure that enforces the bias.}
    \label{fig:ansatze}
\end{figure}

\subsection{State-based approach}\label{state}
The first step of this approach is to fix a choice of observables $\{\mathcal{O}_k\}$. A simple choice we will use in the next section is to take $\mathcal{O}_k = \sigma_z^{(k)}$, so that a computational basis measurement on the $k^{\text{th}}$ qubit samples the label for task $k$.  Given such a choice,  the question then is how to construct $U$ and $\ket{\psi_0}$ such that \eqref{biasq} holds. The method we propose is to build the quantum model from a sequence of two subcircuits $V(\vec{\theta})$ and $W(\vec{\vec{x},\theta})$ (see Figure \ref{fig:ansatze}A). The first circuit is a parameterised unitary that we call the \emph{input preparation unitary}, whose role is to prepare a state for the next circuit that satisfies the bias, i.e.\ such that $\langle\mathcal{H}\rangle=0$. The second circuit is a \emph{bias-invariant layer}, meaning that it preserves the quantity $\langle\mathcal{H}\rangle$ for arbitrary input states from the first layer. 

We first focus on the second circuit $W(\vec{\vec{x},\theta})$. In order to be bias-invariant, unitaries in this circuit must belong to the set
\begin{align}\label{G}
    G = \{ U \;\vert\;  \bra{\psi}U^\dagger\mathcal{H}U\ket{\psi}=\bra{\psi}\mathcal{H}\ket{\psi} \; \forall \ket{\psi} \},
\end{align}
which forms a matrix Lie group under matrix multiplication. The associated matrix Lie algebra\footnote{Here we use the `physicist' convention where we include a factor of $i$ so that the matrices $X$ are Hermitian; in the mathematics literature the $i$ is usually absorbed into the matrix $X$.} is equivalent to the vector space of matrices that exponentiate to elements of $G$:
\begin{align}
    \mathfrak{g} = \{ X \;\vert\; \exp{( itX)}\in G \; \forall t\in\mathbb{R}\}.
\end{align}
For the group $G$, this is just equal to those Hermitian operators that commute with $\mathcal{H}$,
\begin{align}\label{liecommute}
    \mathfrak{g} = \{ X \;\vert\; X^\dagger = X, [X,\mathcal{H}]=0\}.
\end{align}
It follows that the bias $\langle \mathcal{H}\rangle$ is conserved by any data-encoding unitary of the form $W(\vec{x}) = \exp{(i X f(\vec{x}}))$ and any trainable unitary of the form $ W(\vec{\theta}) =  \exp{(i X g(\vec{\theta}))}$, where $X\in\mathfrak{g}$ and $f$, $g$ are arbitrary real functions. Any number of unitaries of this form can therefore be combined in sequence to construct a bias-invariant circuit $W(\vec{\vec{x},\theta})$. The main challenge here is determining the set \eqref{liecommute} and judiciously choosing the generators $X$ from which $W$ is defined. For choices of $\mathcal{O}_k$ with a lot of structure this may be relatively easy however. For example, for the choice $\mathcal{O}_k = \sigma_z^{(k)}$, one finds that any generator $X$ that is diagonal in the computational basis, as well as those of the form $X=\sigma_x^{(i)}\sigma_x^{(j)}+\sigma_y^{(i)}\sigma_y^{(j)}$, $i,j\in\{1,\cdots,m\}, i\neq j$ commute with $\mathcal{H}$. These may be combined linearly to create new generators that also commute with $\mathcal{H}$.

We now turn to the input preparation circuit $V(\vec{\theta})$. This should prepare a state that satisfies the bias, i.e.\ it consists of unitaries in the set 
\begin{align}
    G' = \{ U \;\vert\;  \bra{\psi_0}U^\dagger\mathcal{H}U\ket{\psi_0}=0 \}.
\end{align}
In general this does not form a group (since it is not closed) and therefore may be more difficult to characterise than $G$. Note that even if $\ket{\psi_0}$ satisfies the bias, unitaries in $G'$ do not necessarily commute with $\mathcal{H}$. An example of such a unitary for the case $\mathcal{O}_k = \sigma_z^{(k)}$ and $\ket{\psi_0}=\ket{+}^{\otimes n}$ is
\begin{align}
    V(\theta)=\exp{(i\sigma_x^{(1)} \theta)}\exp{(-i\sigma_x^{(2)} \theta)}
\end{align}
for any $\theta\in(0,\pi]$. We are unaware of general methods to construct parameterised unitaries in $G'$, so at the moment we rely on ad-hoc approaches like the above. 
%

\subsection{Measurement-based approach}
The second approach (Figure \ref{fig:ansatze}B) is based on encoding the bias into the structure of the measurement operators rather than the parameterised unitary. In this case, one can consider an arbitrary input state $\ket{\psi_0}$ and arbitrary parameterised unitary $U(\vec{x},\vec{\theta})$. The observables however must be chosen so that 
\begin{align}\label{ops}
    \mathcal{O}_1 + \mathcal{O}_2 + \cdots + \mathcal{O}_m = \mathcal{H} = 0. 
\end{align}
From the linearity of the trace it follows that this choice ensures \eqref{linbias} is satisfied. An example of observables that satisfy \eqref{ops} for a qubit and $m=3$ are any three equally spaced `trine' observables in a plane of the Bloch sphere, e.g.,  $\mathcal{O}_k = \vec{v}_k\cdot \vec{\sigma}$ with $\vec{v}_k=(\cos \theta_k,\sin \theta_k,0)$ and $\theta_k=2k\pi/3$. However, to the best of our knowledge, observables of this form for larger dimension are not known and further research is required in this direction. Here, one may expect that mathematical tools from other constructions of symmetric measurements such as mutually unbiased bases and SIC-POVMs to be useful \cite{bengtsson2017geometry}. In Appendix \ref{matemagic} we give a previously unknown construction for the case $m=3$ for all even dimensions that satisfies \eqref{ops}. 

Finally we comment that for both the state and measurement based approaches, it may be beneficial to construct models that sample from classical probabilistic mixtures of such circuits by combining them with stochastic neural network models (see e.g.\ \cite{verdon2019quantum}), but we do not consider this here. 

\subsection{Nonlinear biases}\label{sec:nonlinear}
The same techniques can also be used to encode nonlinear biases if they can be expressed through a bias operator as $\langle \mathcal{H}\rangle=0$ in the quantum model. Consider for example a bias of the form
\begin{align}\label{nonlinbias}
\mathbb{E}[f_1(Y^{(1)})\vert \vec{x}] + \mathbb{E}[f_2(Y^{(2)})\vert \vec{x}] + \cdots + \mathbb{E}[f_m(Y^{(m)})\vert \vec{x}] = 0 \quad \forall \vec{x}
\end{align}
where the $f_k$ are potentially nonlinear functions of the labels. If the observables $\mathcal{O}_k$ are of the form $\mathcal{O}_k=\sum_{y} y\mathcal{M}_{y\vert k}$ with $\mathcal{M}_{y\vert k}$ the projector corresponding to label $y$ of task $k$, then the left hand side of the above is given by \eqref{biasq} where 
\begin{align}
\mathcal{H} = f_1(\mathcal{O}_1)+f_2(\mathcal{O}_2)+\cdots +f_m(\mathcal{O}_m),
\end{align}
and $f_k(\mathcal{O}_k)=\sum_{y} f_k(y)\mathcal{M}_{y\vert k}$. Thus, one can use the state-based or measurement-based approach described with this bias operator in order to construct model classes that encode \eqref{nonlinbias}. Of course, the difficult task to find the corresponding generators $X$ in \eqref{liecommute} or a collection of measurements satisfying \eqref{ops} for a given nonlinear bias. It would be interesting to understand which biases admit efficient model constructions in this way. 

\section{Outperforming classical surrogates}\label{sec:surrogate}

In this section we construct a class of quantum models that encode the bias \eqref{linbias} for $m=3$, using the state-based approach of Section \ref{state}. We then train the model class using the RPS data set described in Section \ref{sec:rps}. In order to investigate the effect of the encoded bias, we compare the quantum models to a class of classical surrogate models recently introduced in \cite{surrogates}. As we will see, the quantum model class achieves a lower generalisation error than its associated surrogate model class, which we claim to be due to the ability of the quantum class to encode the relevant inductive bias. We stress that we are not claiming that this is evidence of `quantum advantage' since the quantum models we consider are small enough to be easily simulated on a laptop computer. Code used to generate the plots of this section can be found at \url{https://github.com/XanaduAI/Contextual-ML}.

\subsection{The learning problem}
The learning problem we consider is that of learning the payoff behaviour of an unknown zero-sum game, as described in Problem \ref{def:zerosumprob}. We use the RPS data set
\begin{align}
    \chi = \{\vec{x}_i,\vec{y}_i\},
\end{align}
containing $1500$ strategies $\vec{x}_i\in D_x$ and payoffs $\vec{y}_i$. Since from normalisation the last entry of each row of $\vec{x}$ is linearly dependent on the first two, we compress each $\vec{x}_i$ into a matrix of size $3\times 2$, and scale the data by $\frac{\pi}{2}$ so that the input data points correspond to the matrix
\begin{align}\label{newdata}
    \vec{x} = \frac{\pi}{2}\begin{pmatrix}
    P(A_1=\texttt{R}) & P(A_1=\texttt{P})-P(A_1=\texttt{S}) \\
    P(A_2=\texttt{P}) & P(A_2=\texttt{S})-P(A_2=\texttt{R}) \\
    P(A_3=\texttt{S}) & P(A_3=\texttt{R})-P(A_3=\texttt{P})
    \end{pmatrix} = \begin{pmatrix}
    x_1 & x_4 \\
   x_2 & x_5 \\
   x_3 & x_6
    \end{pmatrix}. 
\end{align}

\begin{figure}
    \centering
    \includegraphics[width=\textwidth]{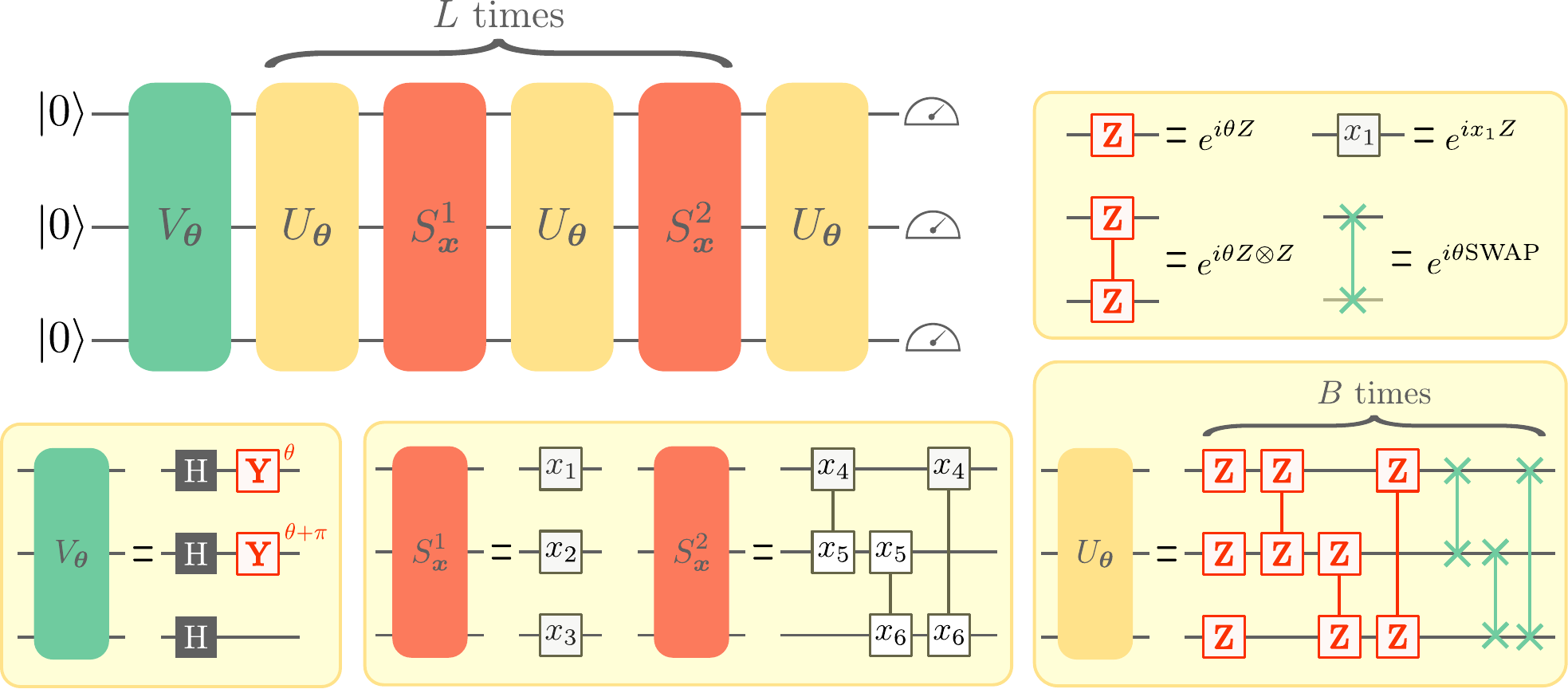}
    \caption{The three-qubit quantum model class we use in our numerical investigation that encodes the bias \eqref{linbias} using the state-based approach of section \ref{sec:encodingbias} (here we use the notation $Z\equiv\sigma_z$). The bias operator for this ansatz is $\mathcal{H}=\sigma_z^1+\sigma_z^2+\sigma_z^3$.  The input preparation unitary $V_{\theta}$ prepares a state such that $\langle \mathcal{H}\rangle=0$, where single-qubit Y rotations on the first and second qubit are correlated to have a difference of $\pi$. The following layers are bias-invariant layers. Each parameterised gate appearing in a unitary $U_{\vec{\theta}}$ has its own parameter (i.e.\ the unitaries $U_{\vec{\theta}}$ can have different parameter values).  Boxes connected by a line in $S^2_{\vec{x}}$ and $U_{\vec{\theta}}$ correspond to parameterised rotations $\exp(i x_k x_l \sigma_z\otimes \sigma_z)$ and $\exp (i \theta \sigma_z\otimes \sigma_z)$ respectively, and SWAP is the usual two qubit swap operator.}
    \label{fig:numerics}
\end{figure}

\subsection{The quantum model class}
We consider a simple class of three-qubit quantum models, described in Figure \ref{fig:numerics}, which encode the bias using the state-based method of section \ref{sec:encodingbias}. This class contains models of the form $\mathcal{P}_{\theta}(\vec{y}\vert\vec{x})$ which we view as a multi-task model $\{\mathcal{P}^1_{{\theta}}(y\vert\vec{x}),\mathcal{P}^2_{{\theta}}(y\vert\vec{x}),\mathcal{P}^3_{{\theta}}(y\vert\vec{x})\}$ by associating each element of $\vec{y}$ with a separate task. Following \cite{schuld2021effect}, the expectation value of the final $\sigma_z$ measurement on any of the qubits is equivalent to a truncated Fourier series of the form
\begin{align}\label{fourier}
    f_{\theta}(\vec{x})=\sum_{\vec{\omega}\in\Omega} c_{\vec{\omega}}(\theta)e^{i\vec{\omega}\cdot\vec{z}(\vec{x})},
\end{align}
where $\vec{z}(\vec{x})=(x_1,x_2,x_3,x_4x_5,x_4x_6,x_5x_6).$ The frequency spectrum $\Omega$ is determined by the eigenspectra of the data-encoding gates. In our case one finds
\begin{align}\label{omegas}
    \Omega_L = \{(\omega_1,\omega_2,\cdots,\omega_6)\;\vert\; \omega_i\in\{-L,-L+1,\cdots,L-1,L\}\;\forall i\},
\end{align}
giving a total of $(2L+1)^6$ vectors $\omega$. The coefficients $c_{\vec{\omega}}(\theta)$ are determined by the structure of the rest of the circuit; note that we have $c_{\vec{-\omega}}=c^*_{\vec{\omega}}$ so that $f$ is real-valued. The model therefore samples labels $\vec{y}$ such that 
\begin{align}\label{qfourier}
    (\mathbb{E}[Y^{(1)}\vert \vec{x}], \mathbb{E}[Y^{(2)}\vert \vec{x}], \mathbb{E}[Y^{(3)}\vert \vec{x}]) = (f^1_{\theta}(\vec{x}),f^2_{\theta}(\vec{x}),f^3_{\theta}(\vec{x})),
\end{align}
where each $f^k_{\theta}$ is of the form \eqref{fourier}. Note that each of the generators of the gates in the circuit commute with the bias operator $\mathcal{H}=\sigma_{z}^{(1)}+\sigma_{z}^{(2)}+\sigma_{z}^{(3)}$ and that the input preparation unitary $V_{\theta}$ always prepares a state such that $\langle \mathcal{H} \rangle=0$. The values of the coefficients of the three Fourier series $f^k_{\theta}(\vec{x})$ are therefore correlated by the circuit structure in such a way that the zero-sum bias is always satisfied.

\begin{figure}
    \centering
    \includegraphics[width=\textwidth]{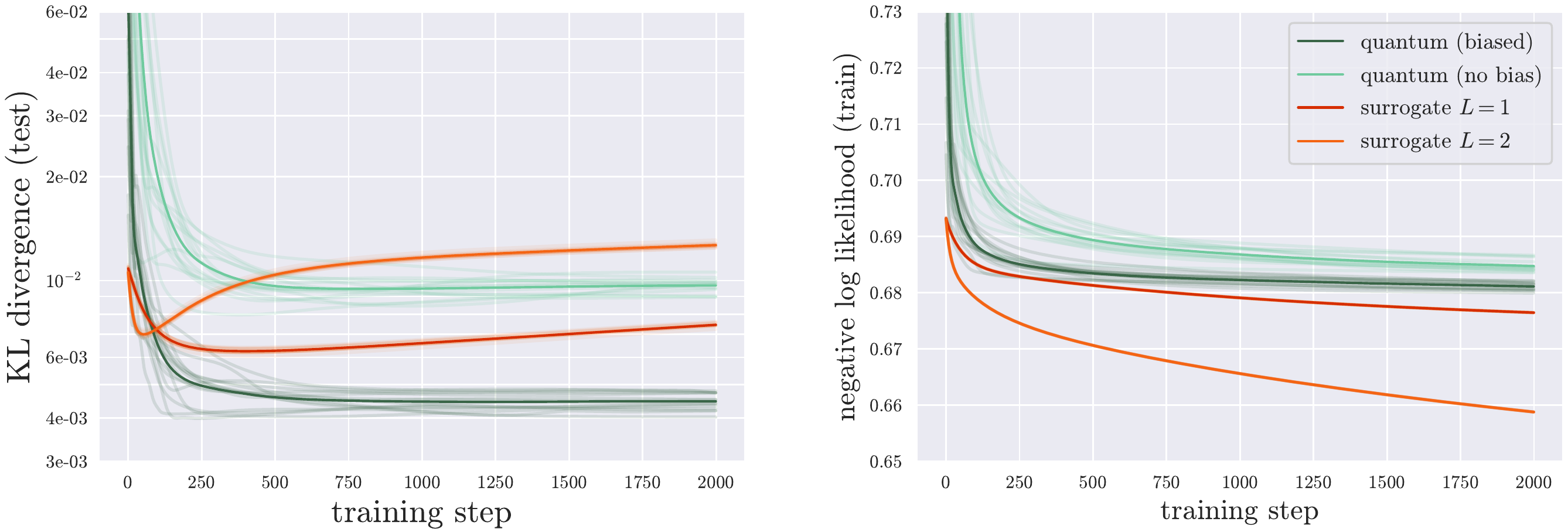}
    \caption{left: The average KL divergence estimated via \eqref{avkl} for the four models we consider. The solid lines denote the average over $20$ random initialisations of the weights (light curves). The best performance is achieved by the biased quantum model. Both surrogate models suffer from overfitting, as can be seen from the increase in KL divergence, and non-convergence of the negative log likelihood \eqref{likelihood} (right).}
    \label{fig:plots}
\end{figure}

\subsection{Classical surrogates of quantum models}
The idea of constructing a classical surrogate model of a quantum model was recently introduced in \cite{surrogates} to study the inductive bias of quantum models for regression tasks, i.e.~models of the form $f_{\theta}(\vec{x})\in\mathbb{R}$. A surrogate model class is simply a linear model in the same Fourier features of the quantum model. For example, consider a quantum model class of the form \eqref{fourier}. The surrogate class is the parameterised truncated Fourier series
\begin{align}\label{linmodel}
    g_{\vec{a},\vec{\beta}}(\vec{x}) = \sum_{\vec{\omega}\in\Omega}\alpha_{\vec{\omega}}\cos(\vec{\omega}\cdot\vec{z(\vec{x}}))+\beta_{\vec{\omega}}\sin(\vec{\omega}\cdot\vec{z}(\vec{x})),
\end{align}
where the $\alpha_{\vec{\omega}}$ and $\beta_{\vec{\omega}}$ are real-valued trainable parameters and $\Omega$ is the same as in \eqref{omegas}. We have chosen this parameterisation rather than the exponential form since we know the model must produce a real number and can therefore avoid using complex numbers. The surrogate model class is larger than the quantum model class since it contains all models in \eqref{fourier}, however it lacks the specific bias of the quantum class that is encoded in the structure of the Fourier coefficients. Interestingly, surrogate model classes often perform better than generic quantum models for regression on a range of data sets \cite{surrogates}.

Since our problem is not a regression problem but involves sampling the payoff labels, we need to define a natural notion of a classical surrogate to compare the quantum model \eqref{qfourier} against. The logic we will follow is as above: view each $f^k_{\theta}(\vec{x})$ of the quantum model as a truncated Fourier series and construct a surrogate model by ignoring the structure in the coefficients. To this end, we consider a collection of three linear models
\begin{align}
 (g^1_{\vec{\alpha}_1,\vec{\beta}_1}(\vec{x}),g^2_{\vec{\alpha}_2,\vec{\beta}_2}(\vec{x}),g^3_{\vec{\alpha}_3,\vec{\beta}_3}(\vec{x})),
\end{align}
where each $g^k_{\vec{\alpha}_k,\vec{\beta}_k}(\vec{x})$ is of the form \eqref{linmodel} and has the spectrum $\Omega_L$. In analogy to \eqref{qfourier} we would like to interpret these values as expectation values $\mathbb{E}[Y^{(k)}\vert \vec{x}]$, however this is not immediately possible since $g^k_{\vec{\alpha}_k,\vec{\beta}_k}$ is not bounded in $[-1,1]$. To remedy this, we use a HardTanh function 
\begin{align}
    h( x ) = 
    \begin{cases}
    -1 & \text{if } x< -1\\
    x  & \text{if } -1 \leq x \leq 1 \\
    1 & \text{if } x>1
\end{cases}
\end{align}
so that payoffs are sampled from the model as 
\begin{align}
(\mathbb{E}[Y^{(1)}\vert \vec{x}], \mathbb{E}[Y^{(2)}\vert \vec{x}], \mathbb{E}[Y^{(3)}\vert \vec{x}]) =  (h(g^1_{\vec{\alpha}_1,\vec{\beta}_1}(\vec{x})),h(g^2_{\vec{\alpha}_2,\vec{\beta}_2}(\vec{x})),h(g^3_{\vec{\alpha}_3,\vec{\beta}_3}(\vec{x}))). 
\end{align}
This surrogate model class has access to the same frequency spectra as the quantum model class and by virtue of being less constrained, contains the quantum class as a specific subset of its parameters. The surrogate model does not however encode the zero-sum bias \eqref{linbias}. This is  because the constraint is related in a complex manner to the weights $\vec{\alpha}_k,\vec{\beta}_k$ and it is therefore not clear how to enforce it. Moreover, attempting to enforce the constraint via a postprocessing of the $g^k$'s seems cumbersome, and we were unable to identify a general method that would scale to larger label dimensions, unlike in the quantum case. We therefore expect that it is infeasible to encode this kind of bias into a linear Fourier features model, and indeed it is this difference that we expect to lead to a lower generalisation error in the quantum model.

\begin{figure}
    \centering
\includegraphics[width=\textwidth]{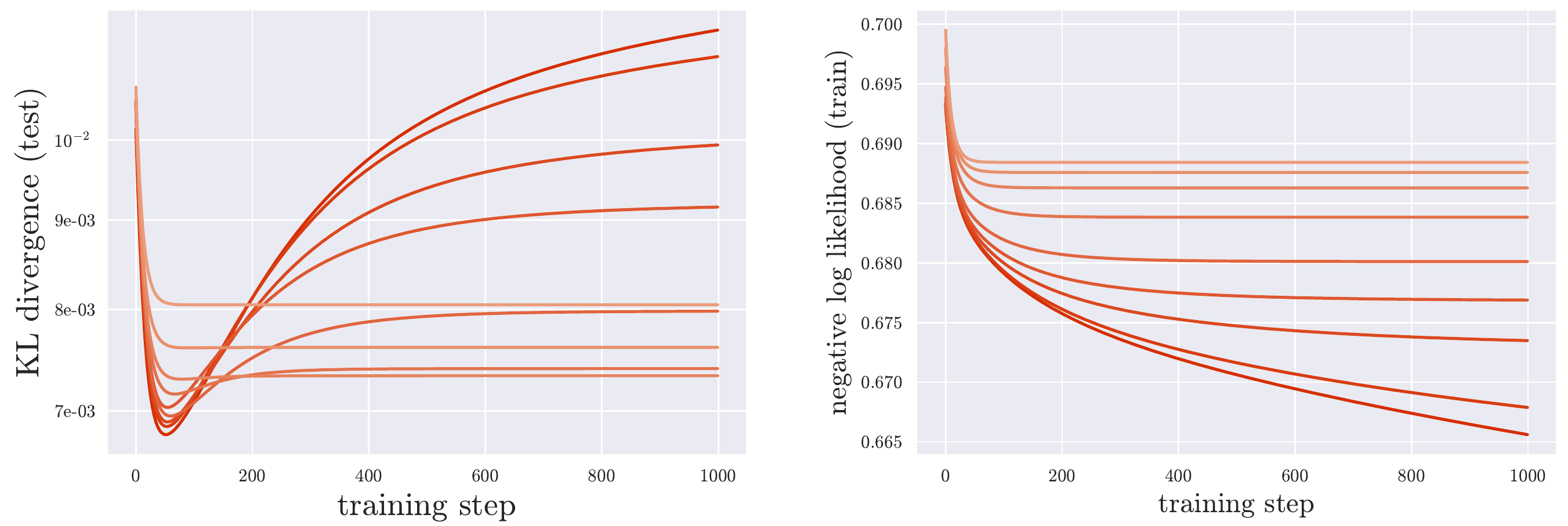}
    \caption{The effect of adding L2 regularisation to the cost function for the surrogate model for $L=2$ for different regularisation strengths (lighter curves imply stronger regularisation). This controls overfitting but does not result in better generalisation.}
    \label{fig:reg}
\end{figure}

\subsection{Training and evaluation}
To train the models we minimise the negative log likelihood of the labels given the strategies
\begin{align}\label{likelihood}
    \mathcal{L} = -\sum_{(\vec{x}_i,\vec{y}_i) \in \chi} \left(\log(\mathcal{P}^1_{\theta}(y_i^{(1)}\vert \vec{x}_i))+\log(\mathcal{P}^2_{\theta}(y_i^{(2)}\vert \vec{x}_i))+\log(\mathcal{P}^3_{\theta}(y_i^{(3)}\vert \vec{x}_i))\right)
\end{align}
using the Adam gradient-descent optimiser. Full details on the data sampling and training can be found in Appendix \ref{app:numerics}. We acknowledge that this training method does not scale well for quantum models since it requires the ability to efficiently estimate log probabilities, which becomes infeasible for large circuits. Indeed, a crucial challenge of quantum machine learning is to develop scalable methods that perform well on generative tasks, but we do not focus on this problem in this work. 

To evaluate the trained models we consider the sum of the Kullback–Leibler divergences between the trained models $\mathcal{P}^k_{\theta}(y\vert \vec{x})$ and the true data distributions $\mathcal{D}_k(y\vert \vec{x})\equiv \mathcal{D}(y^{(k)}\vert \vec{x})$, averaged over $\mathcal{D}(\vec{x})$:
\small
\begin{align}
    \mathbb{E}_{\mathcal{D}(\vec{x})} \left[\frac{1}{3}\sum_{k=1}^{3} D_{\text{KL}}(\mathcal{D}_k(Y\vert\vec{x})\vert\vert \mathcal{P}^k_{\theta}(Y\vert\vec{x})) \right] \label{kltest}
    = \mathbb{E}_{\mathcal{D}(\vec{x})} \left[\frac{1}{3}\sum_{k=1}^{3} \left(\sum_{y=\pm1}\mathcal{D}_k(y\vert\vec{x}) \log \left(\frac{\mathcal{D}_k(y\vert\vec{x})}{\mathcal{P}^k_{\theta}(y\vert\vec{x})}\right)  \right) \right].
\end{align}
\normalsize
%
We sample $N=10000$ strategies $\vec{x}_i^{\text{test}}$ via $\mathcal{D}(\vec{x})$ and estimate the average \eqref{kltest} with the sum 
\begin{align}\label{avkl}
    \frac{1}{3N}\sum_{\vec{x}_i^{\text{test}}} \sum_{k=1}^3 D_{\text{KL}}(\mathcal{D}_k(Y\vert\vec{x}_i^{\text{test}})\vert\vert \mathcal{P}_k(Y\vert\vec{x}_i^{\text{test}})),
\end{align}
which we calculate to machine precision using symbolic expressions of the model probabilities.

\subsection{Results}
In our study we consider four model classes: (i) The biased quantum model given in Figure \ref{fig:numerics} with $L=2$ and $B=2$, (ii) a generic quantum model with approximately the same number of parameters, the details of which we give in appendix \ref{app:ansatz}, (iii) a surrogate model corresponding to a $L=1$ quantum model, (iv) a surrogate corresponding to a $L=2$ quantum model. The results are given in Figure \ref{fig:plots}. As has been observed before \cite{surrogates}, the surrogate models are able to outperform a generic quantum model class. The best performance is achieved by the biased quantum model class however, which we conclude is due to the ability of the quantum model to naturally encode the zero-sum bias. Note that the surrogate models suffer severely from over-fitting, as can be seen from the decrease in the negative log likelihood and increase of generalisation error. This is due to the large number of parameters the surrogate models have relative to the quantum models. To try and counter this, we also optimised the surrogate models using a $L2$ weight penalty term added to the cost \eqref{likelihood}. Although this alleviates the problem of overfitting, we did not find that it improved generalisation performance (see Figure \ref{fig:reg} for the case $L=2$ and Figure \ref{fig:reg1} of the appendix for $L=1$). This suggests that the lower generalisation error of the quantum model over the surrogate originates from the inductive bias rather than overfitting in the surrogate model.

\section{Avenues for contextuality-inspired inductive bias}\label{sec:applications}
In this work we have focused on a single,  simple inductive bias and shown that contextuality plays a role in learning. Our ultimate goal however is to identify complex problems of practical interest for which contextuality is crucial, and for which the structure of quantum models makes them the model of choice. In this section, we suggest a number of scenarios with this goal in mind, and finish by discussing possible extensions to our framework. 

\subsection{Zero-sum games with larger payoff dimension}
Most real-life zero-sum game scenarios involve payoffs that take more values than only $\pm1$ as we considered here. For example, one could consider a zero-sum game scenario with $N$ players and conditional payoff distributions $\mathcal{P}(\vec{y}\vert\vec{x})$ with each element of $\vec{y}$ in $[-C,\cdots,C]$ for some integer $C$. From the zero-sum condition it follows that 
    $\sum_{k=1}^N \mathbb{E}(Y^{(k)}\vert\vec{x}) = 0
    $ $\forall\;\vec{x}$
as before. To see how this leads to a constraint on noncontextual models, we first shift the values of the payoffs by $C$ so that they are in the range $[0,2C]$. This does not change the nature of the problem since it is just a shift of the label data. The zero-sum condition then becomes 
\begin{align}\label{genzerosum}
    \sum_{k=1}^{N} \sum_{y=0}^{2C} \mathcal{P}^k_\theta(y\vert\vec{x})y = C \quad\forall \vec{x} \implies \frac{1}{N}\sum_{k=1}^{N} \sum_{y=0}^{2C} p_y\mathcal{P}^k_\theta(y\vert\vec{x}) = \frac{1}{N(2C+1)} \quad\forall \vec{x},
\end{align}
where $p_y= y/\sum_y y = \frac{y}{C(2C+1)}$ is a probability distribution. 

The right-hand side of the above allows us to identify an operational equivalence in the corresponding operational scenario of a multi-task model. Consider the following two effect densities: ($e_1$) A player is chosen at random. A $2C$-sided die with probabilities $p_i$ is rolled with outcome $y$ and the payoff of Player $k$ is observed, and success corresponds to the model returning the payoff value $y$; ($e_2$) The model is ignored and success is declared with probability $q=1/(N(2C+1))$. These two procedures correspond to the left and right sides of the equality in the right hand side of \eqref{genzerosum}. It follows that the two measurement procedures give identical statistics for all strategies. Since $e_2 = q\Omega +(1-q)\emptyset$ this implies the operational equivalence
\begin{align}
    \frac{1}{N}\sum_{k=1}^N \sum_{y=1}^{2C}p_y\Eb_{y\vert k} \sim q\Omega +(1-q)\emptyset,
\end{align}
where $\Eb_{y\vert k}$ is the effect corresponding to observing payoff $y$ of Player $k$. This leads to the constraint
\begin{align}\label{zerosumgenconstraint}
\sum_{k=1}^N \sum_{y=1}^{2C}\xi_{\Eb_{y\vert k}}( \lambda)p_y = \frac{1}{2C+1} \quad\forall \;\lambda
\end{align}
on the noncontextual ontological model. This will limit the expressivity of the model as before, however a clear picture of the structure of these constraints and how they affect learning is needed. We note that one need not necessarily consider only zero-sum games; similar constraints can be found in the same manner in any constant-sum game as well. 

\subsection{Conserved substances in networks}
Biases of the form \eqref{linbias} become relevant when modelling the behaviour of conserved substances in networks, which may have relevance to problems in biology (e.g.\ a network of cells sharing a conserved substance), population modelling (a network of locations sharing a conserved number of people/animals), manufacturing logistics (a network of factories sharing a conserved number of resources). To formalise these settings, consider a network consisting of $N$ cells that share some substance. We denote the amount of the substance in cell $j$ by $y_j$, which we assume to take discrete values. We also consider some parameters $\vec{x}$ (e.g.\ time, network connectivity) that affect the exchange of the substance throughout the network, and assume that the total amount of substance is conserved on average:
\begin{align}\label{cells}
    \mathbb{E}[Y_1\vert \vec{x}]+ \mathbb{E}[Y_2\vert \vec{x}]+\cdots +  \mathbb{E}[Y_N\vert \vec{x}] = C.
\end{align}
Consider a multi-task model whose aim is to model the amount of substance in a given cell given parameters $\vec{x}$. That is, it is a probabilistic model $\mathcal{P}^k_\theta(y\vert\vec{x})$ that from \eqref{cells} satisfies 
\begin{align}
    \sum_k \sum_y \mathcal{P}^k_\theta(y\vert\vec{x}) y = C.
\end{align}
This is of the same from as the left hand side of \eqref{genzerosum}, and so an analogous constraint to \eqref{zerosumgenconstraint} can be derived on the corresponding noncontextual ontological model. 

\subsection{Quantum control under a conserved quantity}
Consider a quantum control scenario where a multipartite intial state $\rho_0$ is evolved to a final state $\rho(\vec{x})$ and one of a number of measurements $\{\Eb_{y\vert k}\}$ is made with outcome $y=0,1,\cdots,M$. Here, $\vec{x}$ are some parameters of the control scenario, for example strengths of various interactions and interaction time.  Suppose there exists an operator of the form 
\begin{align}
    \mathcal{O} = \sum_{yk}\Eb_{y\vert k} g_y
\end{align}
with $g_y\geq 0$ that is conserved during the control process, i.e. $\langle \mathcal{O}\rangle =C \;\forall \vec{x}$. For example, $ \mathcal{O}$ could be the energy of the system, and control operations are known to conserve energy. If we wish to construct a multi-task model to learn the experimental behaviour, this conservation law translates to the desirable inductive bias 
\begin{align}
    \sum_{y,k}\mathcal{P}^k_\theta(y\vert \vec{x}) g_y = C \implies \frac{1}{N}\sum_{y,k}\mathcal{P}^k_\theta(y\vert \vec{x}) p_y = \frac{C}{NG}
\end{align}
where $G=\sum_{y}g_y$, $p_y=g_y^k/G$. Similarly to the examples above, this allows us to identify an operational equivalence that leads to a constraint
\begin{align}
\sum_{k} \sum_{y}\xi_{\Eb_{y\vert k}}( \lambda)p_y = \frac{C}{G} \quad\forall \;\lambda,
\end{align}
on a noncontextual learning model. 

\subsection{Possible extensions to our framework}
A significant challenge of this work involved deciding on how to cast machine learning in the language of contextuality, and the framework we present here is by no means the only option, nor the most general framework possible. For example, it is possible to define a notion of contextuality for transformations \cite{gencon}, in addition to the preparations and effects we consider in this work. This notion of contextuality may be more suited, for example, to studying the non-classically of layers in models with a sequential structure. Another option would be to consider a definition of contextuality that is defined with respect to the full model class, rather than a single model in the class as in Definition \ref{def:opcontmodel}. Doing this, one could study operational equivalences across different models in the class, although care would be needed to connect these equivalences to relevant concepts in learning. It is also worth stating clearly that the noncontextual constraints we consider in this work are qualitatively different from the constraints in Kochen-Specker type contextuality \cite{kscon}, which is studied in most works connecting quantum computation to contextuality. Kocken-Specker type contextuality can still exist within our framework, but follows from qualitatively different operational equivalences (that we briefly described in Section \ref{sec:gencon}). For this reason our work is something of a departure from the usual perspective of contextuality and could potentially find applications in the wider field of quantum computation.

Generalised contextuality has also recently been connected to the framework of generalised probabilistic theories \cite{gpt1,gpt2,gptcon1,gptcon2} (see also \cite{lami} for an application to associative memories). This framework is commonly used in foundations research and allows one to transform any probabilistic theory, such as classical and quantum theory, into a simple linear structure in which the states and effects are represented as vectors in a vector space\footnote{this is essentially what we have done in the geometric picture of section \ref{sec:linking} and Figure \ref{fig:thmfig}.}. In the generalised probability theory framework, noncontextuality becomes equivalent to a simple geometric condition, called simplex embedability \cite{gptcon1}, which relates to the possibility of embedding the state space into a higher dimensional simplex, and the effect space into its dual. Understanding this structural limitation in the context of learning models may therefore lead to general insights about the structural difference between contextual and noncontextual learners, and could potentially be used to define a measure of contextuality that goes beyond the binary contextual/noncontextual framework we assume.

\section{Outlook}\label{sec:outlook}
Although quantum machine learning has progressed in many ways, it nevertheless remains unclear why a machine learning practitioner would reach for the quantum toolbox amongst the vast array of classical tools on offer. In order to transform quantum machine learning into a science that has real-world impact, it is therefore crucial that we identify the features of quantum models that make them a powerful paradigm for learning. Historically, this power has often been viewed through the lens of quantum computation, and this has led to a program of searching for quantum advantage in terms of asymptotic complexity-theoretic analysis. 

Whereas it is true that some form of computational hardness is required to go beyond the capabilities of classical models, taking this perspective as a starting point may be overly restrictive due to the general difficulty of proving complexity-theoretic separations, and the highly specific and complex  mathematical structures they use. Our work offers a fresh perspective however: perhaps an alternative way to view the value of quantum machine learning is not in terms of computation, but through the ability to parameterise expressive model classes that encode certain data structures. As we have shown, contextuality offers an interesting perspective with which to view this question, and has led us to a specific data structure that can be naturally encoded in quantum models. In order to make further progress in this direction, we believe that greater effort is needed to \emph{
\begin{enumerate}[i]
    \item Identify mathematical structures that are characteristic of quantum theory and understand how these connect to structures present in data; 
    \item Determine which of these structures are connected to non-classicality and thus are likely to be difficult to emulate classically;
    \item Understand the kinds of learning problem where these non-classical structures plays a useful role.
\end{enumerate}}
The nascent field of geometric quantum machine learning \cite{geoml1,geoml2,geoml3}---the study of encoding symmetry structures into quantum models---may be a useful framework in which to tackle the first of these points. 
Other insights may also be possible by borrowing ideas from the quantum foundations literature, particularly for point (ii). For example, how does our framework of contextuality (or extensions thereof) connect to the notion of invariance and equivariance in geometric machine learning? What other common structures in contextuality (such as Kochen-Specker sets \cite{kscon} and antidistinguishability \cite{antid}) can be connected to biases in data? Is the non-signalling multi-partite structure of Bell nonlocality useful for learning certain data sets? Do concepts in quantum foundations suggest how to build a non-classical `quantum neuron'? Answers to these questions will generally not lead to statements of quantum advantage in the language of complexity theory. What they reveal in terms of understanding and intuition may however be valuable in the search for practical uses for quantum machine learning models. 

Perhaps the most pertinent question regarding this particular work is whether the linearly conserved quantities we consider (for potentially larger label dimension) can be usefully encoded in classical models using a tractable amount of resources. The fact that neural networks are nonlinear functions of the input data is actually problematic here, since it becomes difficult to impose linear constraints on the nonlinear output variables \cite{linml}. The linear structure of quantum theory is particularly useful however, as it provides a model in which the output label probabilities are nonlinear in the input data, but the linear bias is nevertheless enforced. Furthermore, classical mechanics is already used as a framework to encode conserved quantities in continuous-valued data \cite{hamnn, lagnn}. Since quantum theory can be seen as a generalisation of classical mechanics that deals with discrete variables (quanta), this does suggest that quantum theory is a very natural structure in which to encode conserved quantities of discrete data. 

\section {Acknowledgements}
We thank Richard East, Gillermo Alonso, Chae-Yeun Park and Flavien Hirsch for useful discussions, and Evan Peters, Richard East and Korbinian Kottmann for comments on the draft. This  project  has  received  funding  from  the  European  Union's  Horizon~2020  research and innovation programme under the Marie Sk\l{}odowska-Curie grant agreement No.~754510. VJW and MF acknowledge support from the Government of Spain (FIS2020-TRANQI, Severo Ochoa CEX2019-000910-S), Fundaci\'o Cellex, Fundaci\'o Mir-Puig and Generalitat de Catalunya (CERCA, AGAUR SGR 1381).
\printbibliography

\appendix

\section{The problem with using the physical theory as an operational scenario}
The most common perspective on generalised contextuality is that it is a property of a physical theory. This makes sense if your motivation is to answer the question `is quantum theory contextual?', however one runs into problems if this perspective is carried over into potential definitions of contextuality within machine learning where the focus is not on a specific theory but on the capabilities of a learning model. To illustrate this issue, consider the following two multi-task models, where we consider the domain $D_x$ of possible inputs to be a finite. 

\begin{enumerate}\label{app:def}
\item A quantum multi-task model where the preparations and effects correspond to density matrices $\rho_{\vec{x}}$ and measurement operators $M_{y}^k$ acting on $\mathcal{H}_d$.
\item A quantum multi-task model that simulates the above model exactly using diagonal density matrices $\tilde{\rho}_{\vec{x}}$ and diagonal measurement operators $\tilde{M}_{y}^k$ acting on $\mathcal{H}_{d'}$ with $d'\geq d$. 
\end{enumerate}
The second model always exists since any quantum model can be simulated with enough classical resources, and this classical simulation can be encoded into the computational basis states of a quantum system of larger dimension. In particular, one can use the choice $\tilde{\rho}_{\vec{x}}=\ket{i(\vec{x})}\bra{i(\vec{x})}$, $i(\vec{x})\in\{1,\cdots,\vert D_x \vert\}$ (so that the inputs are uniquely indexed by computational basis states), and $\tilde{M}_y^k = \sum_i \text{tr}(\rho_{\vec{x}(i)}M_y^k)\ket{i}\bra{i}$. One finds $\text{tr}(\tilde{\rho}_{\vec{x}}\tilde{M}_y^k)=\text{tr}({\rho}_{\vec{x}}{M}_y^k)$ as desired. 

If one takes the perspective that the operational scenario should correspond to the underlying theory (in this case, quantum theory), then one will arrive at different conclusions regrading the contextuality of the above two models. This is most easily seen by noting that there can be no preparation equivalences in model 2, since any diagonal state has a unique decomposition in terms of computational basis states. In the absence of preparation equivalences it is impossible to have generalised contextuality, as can be shown by constructing an explicit noncontextual ontological model where the ontic space corresponds to the set of density matrices, i.e. $\Lambda = \{\rho\}$, and the measurement response functions are given by the Born rule. It follows that the model 2 can never exhibit contextuality, whereas model 1 in principle can due to existing proofs of generalised contextuality in finite dimensional quantum theory.  

If we take the underlying physical theory to define the operational scenario we are therefore forced to accept that our notion of contextuality should depend on physical details that are entirely irrelevant for the actual behaviour of the model. This is unreasonable in our opinion since it doesn't allow one to connect contextuality to the behaviour of the device; hence the route we propose. 

\section{Proof of Theorem \ref{thm:main} for $\eta=1$}\label{app:zeronoise}
Recall from that 
\begin{align}
    \vec{v}_{\vec{x}} = (\mathcal{P}_\theta^1(+1\vert \vec{x}),\mathcal{P}_\theta^2(+1\vert \vec{x}),\mathcal{P}_\theta^3(+1\vert \vec{x}))=(P(\Eb^1_{+} \vert\Sb_{\vec{x}}),P(\Eb^2_+ \vert\Sb_{\vec{x}}),P(\Eb^3_+ \vert\Sb_{\vec{x}}))
\end{align} 
and $V = \{\vec{v}_{x}\vert \vec{x}\in D_{\vec{x}}\}$.
Since we have $\eta=1$, $V$ contains the points
\begin{align}\label{nonoisepoints}
    \vec{v}_1 = (1,0,\frac{1}{2}),\; \vec{v}_2 = (0,1,\frac{1}{2}),\;   \vec{v}_3 = (\frac{1}{2},1,0),\;  \vec{v}_4 = (\frac{1}{2},0,1),\; \vec{v}_5 = (0,\frac{1}{2},1), \;  \vec{v}_6 = (1,\frac{1}{2},0).
\end{align}
It follows that there exist inputs $\vec{x}_i$ such that $\vec{v}_{\vec{x}_i}=\vec{v}_i$. Since the vectors $\vec{v}_i$ are extremal in the space of models that satisfy the bias, any $\vec{v}_{x_i}$ can be written as a convex combination of them. Thus, any preparation $\Sb_{\vec{x}}$ is operationally equivalent to a convex mixture of the six preparations $\Sb_{i}\equiv\Sb_{x_i}$,
\begin{align}\label{opeqv_prep2}
    \Sb_{\vec{x}} \sim \sum_i p_i(\vec{x})\Sb_{i} \quad \forall \vec{x}
\end{align}
for some convex weights $p_i(\vec{x})$.

We now consider an ontological model of the operational scenario. Assuming operational noncontextuality, the operational equivalence \eqref{opeqv_prep} implies
\begin{align}\label{mixtures}
   \frac{1}{2}( \mu_{\Sb_1}(\lambda)+ \mu_{\Sb_2}(\lambda)) =  \frac{1}{2}( \mu_{\Sb_3}(\lambda)+ \mu_{\Sb_4}(\lambda)) =  \frac{1}{2}( \mu_{\Sb_5}(\lambda)+ \mu_{\Sb_6}(\lambda)) \quad \forall \lambda.
\end{align}
Using
\begin{align}
    \supp{\frac{1}{2}( \mu_{\Sb_i}+ \mu_{\Sb_j})} =  \supp{\mu_{\Sb_i}}\cup \supp{\mu_{\Sb_j}}
\end{align}
it follows from \eqref{mixtures} that
\begin{align}\label{supps}
   \supp{\mu_{\Sb_1}} \cup \supp{\mu_{\Sb_2}} = \supp{\mu_{\Sb_3}} \cup \supp{\mu_{\Sb_4}} = \supp{\mu_{\Sb_5}} \cup \supp{\mu_{\Sb_6}}, 
\end{align}
where $\text{supp}$ is the support of the ontic distribution; i.e.\ the set of $\lambda$ which have non-zero probability. The set of ontic states is the union of possible ontic states for each strategy,
\begin{align}
    \Lambda = \bigcup_{\vec{x}\in D_{\vec{x}}} \supp{\mu_{\Sb_{\vec{x}}}}.
\end{align}
From \eqref{opeqv_prep2} it follows 
\begin{align}
    \supp{\mu_{\Sb_{\vec{x}}}}\subseteq \bigcup_i \supp{\mu_{\Sb_i}},
\end{align}
and so from \eqref{supps}
\begin{align}
     \Lambda = \bigcup_i \supp{\mu_{\Sb_i}} = \supp{\mu_{\Sb_1}} \cup \supp{\mu_{\Sb_2}} = \supp{\mu_{\Sb_3}} \cup \supp{\mu_{\Sb_4}} = \supp{\mu_{\Sb_5}} \cup \supp{\mu_{\Sb_6}}.
\end{align}
Next, consider the first elements of $\vec{v}_1$ and $\vec{v}_2$. These imply
\begin{align}
    P(\Eb^1_+\vert \Sb_1) = 1, \quad P(\Eb^1_+\vert \Sb_2) = 0.
\end{align}
Since 
\begin{align}
    \int_{\lambda \in \supp{\mu_{\Sb_1}}} \text{d}\lambda\, \mu_{\Sb_1}(\lambda)=1
\end{align}
if follows from
\begin{align}
    P(\Eb^1_+\vert \Sb_1) = \int_{\lambda \in \supp{\mu_{\Sb_1}}}\text{d}\lambda\, \mu_{\Sb_1}(\lambda)\xi_{\Eb^1_+}(\lambda) = 1
\end{align}
that $\xi_{\Eb^1_+}(\lambda)=1$ for $\lambda\in\supp{\mu_{\Sb_1}}$. In a similar fashion
\begin{align}
    P(\Eb^1_+\vert \Sb_2) = \int_{\lambda \in \supp{\mu_{\Sb_2}}}\text{d}\lambda\, \mu_{\Sb_2}(\lambda)\xi_{\Eb^1_+}(\lambda) = 0
\end{align}
implies $\xi_{\Eb^1_+}(\lambda)=0$ for $\lambda\in\text{supp}(\mu_{\Sb_2})$. The ontic response function $\xi_{\Eb^1_+}(\lambda)$ is therefore deterministic on $\supp{\mu_{\Sb_1}}\cup\supp{\mu_{\Sb_1}}=\Lambda$. Analogous arguments using the pairs $\vec{v}_1$ and $\vec{v}_2$, or $\vec{v}_3$ and $\vec{v}_4$ can be constructed to show that $\xi_{\Eb^2_+}(\lambda)$ and $\xi_{\Eb^3_+}(\lambda)$ are also deterministic on $\Lambda$. We therefore have 
\begin{align}\label{detfuns}
    \xi_{\Eb_k^+}(\lambda) \in \{0,1\} \quad \forall \lambda\in \Lambda. 
\end{align}
We now turn to the operational equivalence \eqref{opeqv_meas}. Applying noncontextuality, this implies
\begin{align}
   \frac{1}{3}( \xi_{\Eb^1_+}(\lambda)+\xi_{\Eb^2_+}(\lambda)+\xi_{\Eb^3_+}(\lambda)) = \frac{1}{3}( \xi_{\Eb^1_-}(\lambda)+\xi_{\Eb^2_-}(\lambda)+\xi_{\Eb^3_-}(\lambda)) \quad \forall \lambda,
\end{align}
or using $ \xi_{\Eb_k^-}(\lambda) =  1-\xi_{\Eb_k^+}(\lambda)$,
\begin{align}
    \xi_{\Eb^1_+}(\lambda)+\xi_{\Eb^2_+}(\lambda)+\xi_{\Eb^3_+}(\lambda) = \frac{3}{2} \quad\forall \lambda.
\end{align}
Since we have proven that the functions $\xi_{\Eb_k^+}$ are deterministic in \eqref{detfuns}, there is no way of satisfying these equations for any $\lambda$. This contradiction therefore proves impossibility of a noncontextual learning model.

\section{Example of a model without preparation equivalences}\label{app:example}
As an example of such a model, consider a contextual multi-task model $\{\mathcal{P}_{\theta}^1(y\vert\vec{x}),\mathcal{P}_{\theta}^2(y\vert\vec{x}),\mathcal{P}_{\theta}^3(y\vert\vec{x})\}$. Construct a joint model from this as $\mathcal{P}_{\theta}^1(\vec{y}\vert\vec{x})=\mathcal{P}_{\theta}^1(y^{(1)}\vert\vec{x})\mathcal{P}_{\theta}^2(y^{(2)}\vert\vec{x})\mathcal{P}_{\theta}^3(y^{(3)}\vert\vec{x})$, i.e., the labels of the joint model are sampled independently using the multi-task model. Let us assume that the multi-task model is contextual in the sense of Theorem \ref{thm:main} for $\eta=1$, so that the six preparations densities $s_i$ correspond to the six vectors $\vec{v}_i$. From \eqref{vprobs} if follows that for the preparation density $\frac{1}{2}s_1+\frac{1}{2}s_2$ all labels can occur in the joint model except the labels $\vec{y}=(0,0,1),(1,1,1)$. Similarly for $\frac{1}{2}s_3+\frac{1}{2}s_4$ the labels $\vec{y}=(1,0,0),(1,1,1)$ never occurs, and for $\frac{1}{2}s_5+\frac{1}{2}s_6$ the labels $\vec{y}=(0,1,0),(1,1,1)$ never occur. This means that the label distributions for the different pairs of preparations densities are not the same, and hence the pairs of preparation densities are not operationally equivalent.

\section{Higher dimensional measurements satisfying the zero-sum bias condition}\label{matemagic}
Here we give an example of a set of three $\pm1$ valued observables $\mathcal{O}_k$ in even dimension $d$ that satisfy the zero-sum bias condition $\mathcal{O}_1+\mathcal{O}_2+\mathcal{O}_3=0$ for all even $d \ge 2$. We write the three observables as $\mathcal{O}_1=2P-\mathbb{I}$ and $\mathcal{O}_2=2Q-\mathbb{I}$ (and $\mathcal{O}_3 = -\mathcal{O}_1-\mathcal{O}_2$), where $P$ and $Q$ are the projections onto the positive eigenspaces of the observables $\mathcal{O}_1$ and $\mathcal{O}_2$, respectively. Since $\mathcal{O}_3$ is a $\pm1$ valued observable it follows that $\mathcal{O}_3^2=\mathbb{I}$, that is,
\begin{equation}
\mathbb{I} = (\mathcal{O}_1 + \mathcal{O}_2)^2 = 4(P + Q - \mathbb{I})^2 = 4(P+Q)^2 + 4\mathbb{I} - 8(P+Q) = 4(P + Q + PQ + QP) + 4\mathbb{I} - 8(P+Q)
\end{equation}
and therefore,
\begin{equation}
\frac34 \mathbb{I} = P+Q-PQ-QP = (P-Q)^2.
\end{equation}
The problem is therefore equivalent to finding projections $P$ and $Q$ such that $(P-Q)^2=\frac{3}{4}\mathbb{I}$. 

To achieve this for even dimensional Hilbert spaces, we write the Hilbert space as $\mathbb{C}^d \simeq \mathbb{C}^{\frac{d}{2}} \oplus \mathbb{C}^{\frac{d}{2}}$, and define $P$ and $Q$ in block forms
\begin{equation}
P =
\begin{pmatrix}
\mathbb{I} & 0 \\
0 & 0
\end{pmatrix},
\quad
Q = \frac14
\begin{pmatrix}
\mathbb{I} & \sqrt{3} \mathbb{I} \\
\sqrt{3} \mathbb{I} & 3 \mathbb{I}
\end{pmatrix}.
\end{equation}
It is straightforward to verify that $P$ and $Q$ are projections and that $(P-Q)^2 = \frac34 \mathbb{I}$ holds. Therefore, the corresponding observables satisfy the desired bias.


\section{Details of the unbiased quantum model}\label{app:ansatz}

\begin{figure}
    \centering
    \includegraphics[scale=0.9]{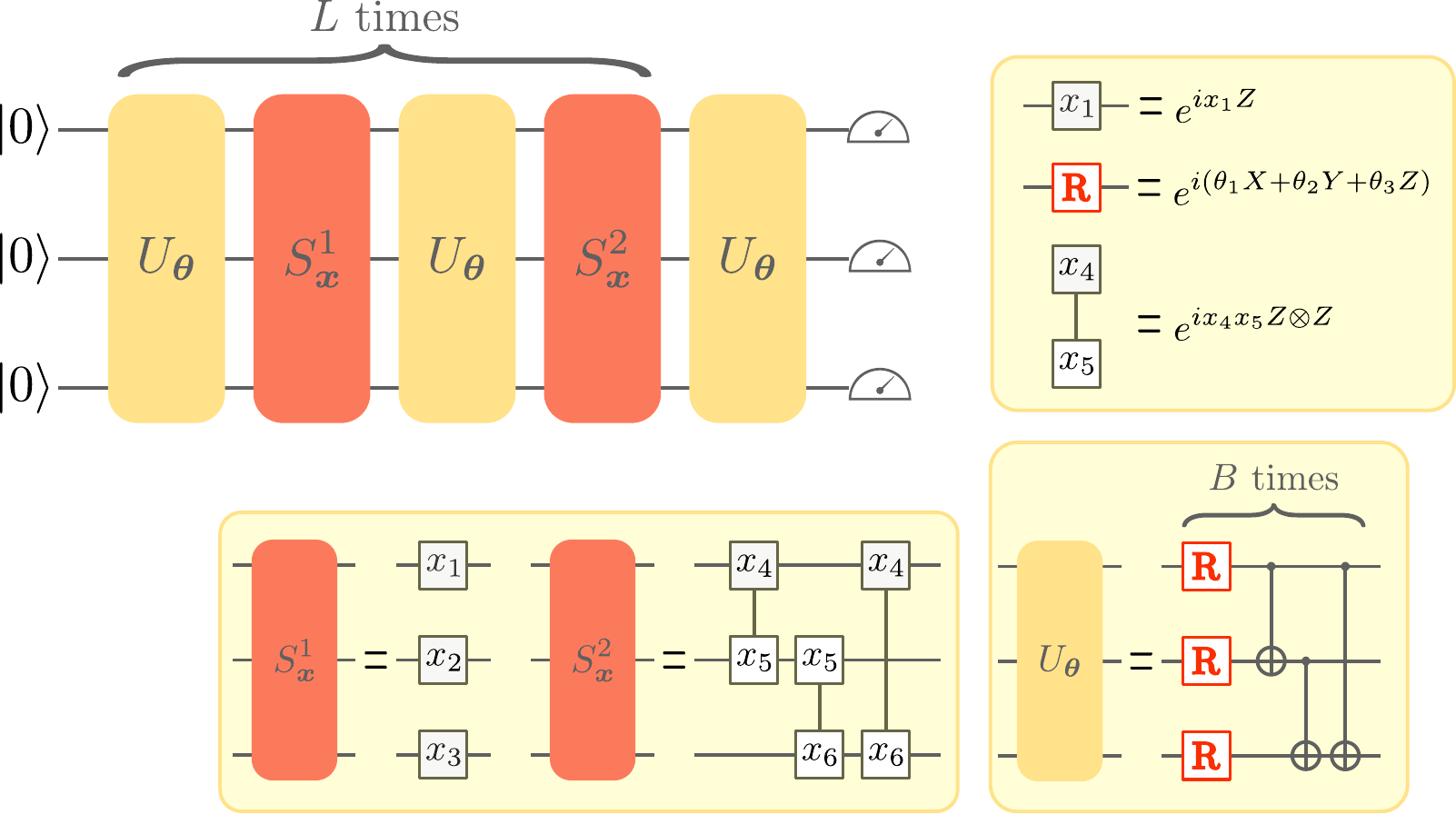}
    \caption{The unbiased quantum model. As for the biased qunatum model, each unitary $U(\vec{\theta})$ has 9 parameters given by the three parameterised qubit rotations. The data encoding unitaries are identical to the biased quantum model, and so this model has access to the same Fourier spectrum.}
    \label{fig:unst}
\end{figure}

The generic unbiased quantum model that we use for the numerical investigation of section \eqref{sec:surrogate} is described in Figure \ref{fig:unst}. 

\section{Data generation and training details}\label{app:numerics}
Here we outline the specific details pertaining to the numerical investigation of section \ref{sec:surrogate}. To generate each row of the $3\times 3$ strategy matrix $\vec{x}$, we sample three uniformly random numbers $r_1, r_2, r_3$ in $[0,1]$ and set the row entries as $(r_1,r_2,r_3)/\sum_i r_i$, which results in a valid probability distribution. This process defines the distribution $\mathcal{D}(\vec{x})$ described in the main text.  We repeat this 1500 times to generate all input strategies in the training set. The corresponding labels are sampled according to the rules of the RPS game via \eqref{conprior}. 

Both quantum models are initialised by sampling each of their parameters uniformly in $[0,2\pi]$. Each parameter of the surrogate model is initialised by sampling a number uniformly in $[-1,1]$ then multiplying by $1/\#\texttt{params}$ where $\#\texttt{params}=2(2L+1)^6$ is the total number of parameters of the model. To optimise the models we use JAX and the optax package, in combination with pennylane in the case of the quantum models. For the gradient descent, we employ full batch gradient descent (no significant different was observed using mini-batches). For the quantum models we use the adam update with the optax default settings and an initial learning rate of $0.001$. For the suroogate model, since the L2 regularised cost function is convex in the optimisation parameters, we used standard gradient descent with a learning rate of $0.001$ (L=1) and $0.0001$ (L=2). 

\begin{figure}
    \centering
    \includegraphics[width=\textwidth]{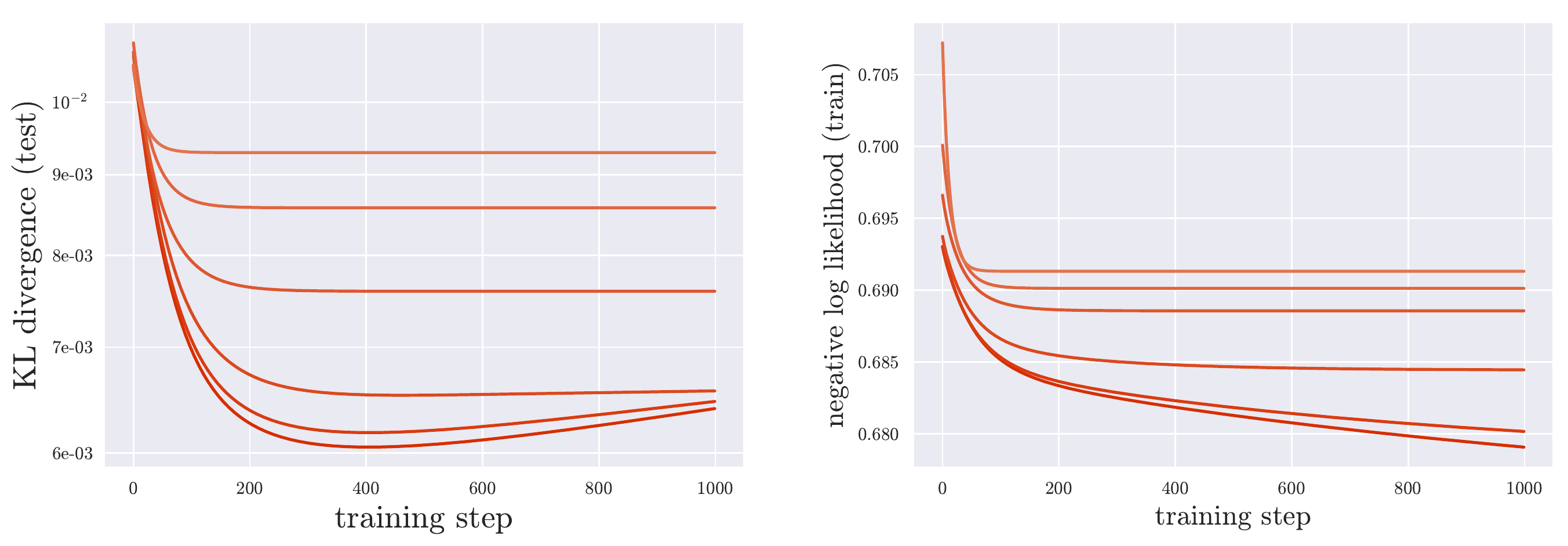}
    \caption{The effect of adding L2 regularisation to the cost function for the surrogate model for $L=1$ for different regularisation strengths (lighter curves imply stronger regularisation).}
    \label{fig:reg1}
\end{figure}

\end{document}